\providecommand{\keywords}[1]
{
  \small	
  \textbf{\textit{Keywords---}} #1
}
\DeclareMathOperator*{\poa}{PoA}
\newtheoremstyle{teo}
  {3pt} 
  {3pt} 
  {\itshape} 
  {} 
  {\bfseries} 
  {.} 
  {.5em} 
  {} 
\theoremstyle{teo}
\newtheorem{lem}{Lemma}
\newtheorem{prop}{Proposition}
\newtheorem{corol}{Corollary}
\newtheorem{thm}{Theorem}
\newtheoremstyle{defi}
  {3pt} 
  {3pt} 
  {} 
  {} 
  {\bfseries} 
  {.} 
  {.5em} 
  {} 
\theoremstyle{defi}
\newtheorem{defi}{Definition}
\newtheorem{assum}{Assumption}
\newtheoremstyle{rema}
  {3pt} 
  {3pt} 
  {} 
  {} 
  {\bfseries} 
  {.} 
  {.5em} 
  {} 
\theoremstyle{rema}
\newtheorem{rem}{Remark}
\date{}
\begin{document}
\title{Potential detrimental effects of real-time route recommendations in traffic networks}
\author[1]{Tommaso Toso}
\author[1]{Paolo Frasca}
\author[1,2]{Alain Y. Kibangou}

\affil[1]{Univ.\ Grenoble Alpes, CNRS, Inria, Grenoble INP, GIPSA-lab, 38000 Grenoble, France (e-mail: name.surname@grenoble-inp.fr).}
\affil[2]{Univ.\ of Johannesburg (Auckland Park Campus), Johannesburg 2006, South Africa.}
\renewcommand{\Authfont}{\large}
\renewcommand{\Affilfont}{\scriptsize}
\maketitle

\begin{abstract}

Navigation apps 
have become pervasive in providing real-time route recommendations to travelers willing to minimize their travel times. However, such technologies introduce new complexities, raising concerns about their overall impact on traffic networks. This paper focuses on evaluating the effect of navigation apps on traffic flows, particularly examining how real-time route recommendations influence network efficiency and congestion. Using a dynamical network flow model, we study traffic dynamics between an origin-destination pair, where a fraction of drivers follow app recommendations while others rely on fixed route preferences. By incorporating supply-demand mechanisms to account for capacity and volume constraints on routes, we uncover \emph{partial demand transfer}, i.e., only a portion of the traffic demand is able to traverse the network, while the rest builds up congestion at the origin. We prove that the dynamics converges to a globally stable equilibrium and we provide a detailed analysis of this equilibrium when the choices of the informed drivers follow a logit model, correlating the emergence of partial demand transfer to the penetration rate of navigation apps among users.
\end{abstract}

\keywords{Transportation networks, Dynamical network flows, Navigation apps.}

\section{Introduction}


Transportation networks are crucial to modern economies, facilitating the movement of people and goods. Despite their importance and the need for efficiency, these networks often suffer from congestion, which affects not only the economy but also quality of life, environmental sustainability, and social equity.

Over the past two decades, innovations in information and communication technologies, along with the widespread adoption of GPS-enabled devices and smartphones, have led to the emergence of new digital services aimed at simplifying mobility for users, exploiting real-time information. Navigation apps are a prime example of these mobility services. Designed to assist travelers in finding the most efficient routes, these apps utilize GPS technology, digital maps, and live traffic data to provide optimized routing recommendations that minimize travel time and avoid slowdowns and bottlenecks. The near-ubiquitous adoption of navigation apps is evident in their impressive user bases, with Google Maps having over 2 billion monthly users worldwide and Gaode Map reaching 730 million monthly active users in China as of 2022 \cite{stat;maps,stat;nav}.

Recent research has explored how navigation apps are reshaping mobility and their impact on traffic efficiency across networks. While these technologies are aimed to improve travelers' choices, they also introduce complexities that can lead to potential inefficiencies. The key question is whether the presence of informed travelers, following navigation app recommendations, improves overall traffic conditions or instead their actions create negative externalities and unexpected consequences. This paper contributes to this research line, focusing on evaluating how navigation apps impact road traffic and their potential to degrade network efficiency.



In recent years, the impact of navigation apps on traffic networks has been commonly modeled through selfish routing, where users act as self-interested agents seeking to minimize their individual travel times. This approach aligns well with the behavior of app-informed users who follow the fastest routes recommended by these apps. The inefficiencies arising from selfish and uncoordinated behaviors have been widely studied, particularly through the concepts of Wardrop equilibria \cite{wardrop1952}, which describe the resulting traffic flows, and Price of Anarchy, a common measure used to quantify the inefficiency of such equilibria \cite{poa,rough}. Two main frameworks have emerged in order to assess the impact of app-informed users: heterogeneous routing games \cite{dafermos72} and dynamical network flow models \cite{surveycomo}. Heterogeneous routing games allowed for assessing how the presence of app-informed users influences the Price of Anarchy, linking the widespread use of navigation apps with a reduction in overall traffic efficiency across networks \cite{thai,ibp,th:cabannes}, while dynamical network flow models focus on the stability of Wardrop equilibria, examining whether traffic networks converge to these equilibria \cite{como2022,bianchin2024}. Despite their value in understanding traffic dynamics, both frameworks present significant limitations—notably, their lack of capacity and volume constraints on network links.

The main goal of this paper is to address these limitations by explicitly modeling capacity and volume constraints in order to capture more realistic network dynamics and uncover potential inefficiencies or unintended consequences that may arise from the interplay between user behavior, digital platforms, and network limitations. We study the problem in the context of dynamical network flow models, using an ODE system to describe traffic dynamics between an origin-destination pair subject to a traffic demand of vehicles that have to choose one of two possible travel options. A fraction of the users, namely the app-informed users, rely on real-time route recommendations from a navigation app to determine the shortest travel time route. The remaining drivers select routes based on predetermined preferences. The fraction of traffic demand allocated to each route, known as the \emph{routing ratio}, is a dynamic variable that adjusts in response to traffic density. This feedback mechanism dynamically regulates flow based on traffic conditions, reflecting app-informed users’ responsiveness to real-time travel times. Within each route, the traffic dynamics integrate a supply-demand mechanism that imposes capacity and volume constraints on the two available routes. Thanks to the supply-and-demand mechanism, we can identify a critical consequence of navigation app usage that goes beyond the well-known issue of reduced traffic efficiency due to increased total travel time. This consequence is known as \emph{partial demand transfer}, where the system converges to an equilibrium flow on the two routes, allowing only a portion of the exogenous traffic to enter and traverse the network. The remaining traffic gets stuck at the origin, causing congestion to build up at the entrance.


\subsubsection*{Contribution} 
Towards assessing the impact of navigation apps on traffic efficiency, we claim three main contributions.
First, we propose an original dynamical network flow model, which is able to capture the strategic behavior of app-informed drivers in response to variations in the travel times on the two routes and its impact on the traffic state. Our model is the first to incorporate a supply and demand mechanism on network links and state-dependent routing ratios. Second, we demonstrate global asymptotic stability for a broad family of user preference dynamics, i.e., routing ratios. Third, we study the properties of the unique equilibrium assuming that user preferences follow the logit choice model \cite{sandholm}. This analysis is performed in two limit regimes. In the regime of high compliance to app’s recommendations, we show that the equilibrium approximates the Wardrop equilibrium of a suitable routing game. In the low compliance regime, we derive a linear approximation of the user preferences' dynamics. This twofold study shows that navigation apps can degrade the network efficiency, by increasing the average travel time (in line with previous works \cite{ibp,th:cabannes}) and by leading to partial demand transfer. The latter drawback was not identified by previous work. The key variable in our steady-state analysis is the \emph{penetration rate}, that is, the share of app-informed users in the total demand. Our analysis shows that a high penetration rate is likely to degrade the network efficiency when compliance is high.

Finally, the theoretical results are showcased through numerical experiments on our model and corroborated by simulations using the microscopic traffic simulator Aimsun \cite{aimsun}, offering comprehensive validation of this work.

\subsubsection*{Related works}
Heterogeneous routing games have been the main model for studying the relationship between the penetration rate of navigation apps among users and traffic efficiency, as they provide a setting to naturally distinguish users between informed and uninformed. In \cite{thai}, the authors highlighted two key findings: increasing the penetration rate can relief congestion on major roads, but this leads to increased traffic on secondary roads, a phenomenon known as cut-through traffic\footnote{Cut-through traffic means traffic that passes through a given residential neighborhood that has neither an origination nor destination point in that neighborhood.}. Contradicting the insight that more informed-users always improves efficiency, \cite{ibp} introduced the concept of Informational Braess' Paradox (IBP), showing that increasing users' knowledge of available routes can increase total travel time. Interestingly, the highest inefficiency arises when all users have complete route information. Building on these insights, Cabannes \cite{th:cabannes} explored a similar two-class game and demonstrated that higher penetration of navigation apps can push traffic toward Wardrop equilibria, potentially worsening total travel time and intensifying cut-through traffic on secondary roads. 

Macroscopic PDE models have also been used to study the effects of selfish routing driven by real-time information. These models combine mass conservation with a Hamilton-Jacobi equation to represent traffic dynamics and user choices. Studies \cite{goatin1,goatin2,cristiani} provide numerical experiments aligning with those stemming from their static counterparts.

The observation that higher penetration of navigation apps can compromise network efficiency aligns with findings from studies on Bayesian routing games. In this setting, users have incomplete or uncertain knowledge about the traffic state, and the goal is to determine how to best inform users to optimize network performance. These works reveal that providing users with full information is often not the optimal strategy. Instead, carefully limiting the information available to users can lead to better overall outcomes by preventing behaviors that exacerbate congestion and inefficiency across the network \cite{bryce2024,wu,mlifac}.

The stability of dynamical network flows has been extensively studied in the literature \cite{surveycomo,como2013a,como2013b,como2013c,como2015,dahleh2018,bianchin2024}, often relying on monotonicity \cite{Hirsch} and contractivity \cite{bullo} properties to demonstrate asymptotic stability. However, these studies typically overlook supply and demand mechanisms that regulate capacity constraints on network links. Unlike traffic networks, these models assume unconstrained link inflows, which can be arbitrarily high but are constrained only by link capacities for outflows. This setup ensures that traffic dynamically adjusts to minimize travel times, facilitating complete transferability of exogenous flows up to the network's min-cut capacity. In contrast, our model incorporates a realistic supply and demand
mechanism on each link. This addition fundamentally alters the traffic dynamics and disrupts the contractivity property. Supply and demand models are also present in \cite{coogan2015,lovisari2014}. However, in those works, user preferences are kept fixed and do not evolve according to the state of the network.

\subsubsection*{Conference version} A preliminary and incomplete version of this work was presented in \cite{tap}. The main improvements in this journal version include: (i) a more general model, allowing for a fraction of app-informed users and more flexible routing ratios; (ii) a new stability analysis based on the system's monotonicity; (iii) an analysis of logit-based routing ratios; and (iv) validation through simulations conducted in Aimsun.

\subsubsection*{Paper organization}
The considered model and the most relevant notions are described in Section~\ref{sec2}  before carrying out the stability analysis in Section~\ref{sec3}. In Section~\ref{sec4}, we study the properties of the equilibrium for the case of logit routing ratios. This analysis illustrates the impact of routing recommendations. In Section~\ref{sec5}, we propose both macroscopic and microscopic numerical simulations that illustrate and corroborate our theoretical results. Finally, Section~\ref{sec6} presents some concluding remarks and ideas for future work. 

\section{Model description}\label{sec2}

\begin{figure}
    \centering    
\includegraphics[width=.8\textwidth]{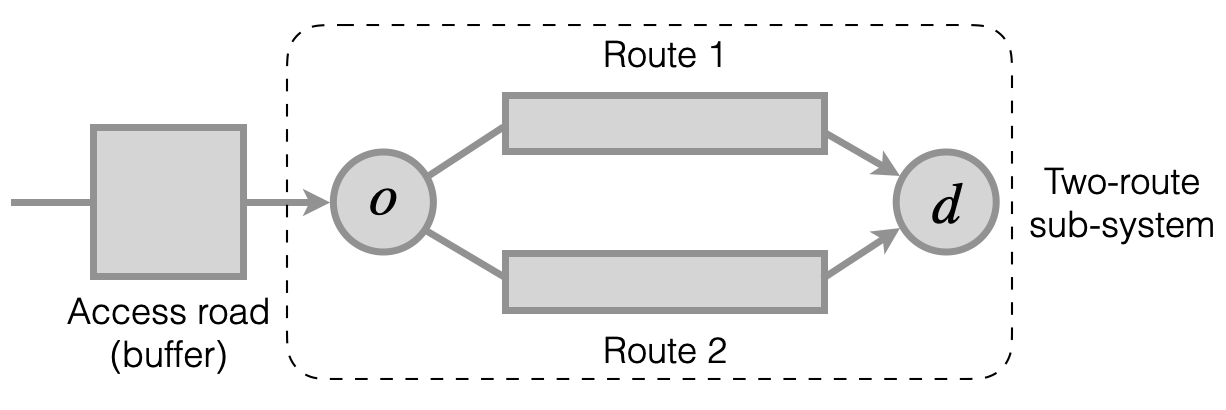}
    \caption{The origin-destination pair $\mathcal{G}$.}
    \label{grp}
\end{figure}
In this section, we introduce the dynamical network flow model we desgined to tackle the problem.
Consider an origin-destination pair $\mathcal{G}$ with two possible commute options (syn. routes). For each route~$l$, $l=1,2$, the positive values $x_l$, $B_l$, $C_l$, $F_l$ and $L_l$ represent the density (veh/km), jam density (veh/km), critical density (veh/km), capacity (veh/h) and length (km), respectively, with \mbox{$C_l<B_l,\ l=1,2$}. The two-route system is subject to an exogenous traffic flow, which is supplied via an access road A (see Figure~\ref{grp}). The traffic demand $\Phi$ is split between the two routes according to the routing ratios $R_1$, $R_2$, corresponding to the fraction of exogenous flow directed
toward Route $1$ and Route $2$, respectively. The traffic dynamics over the network is
captured by the following ODE system, consisting in a conservation law describing how the density $x_l\in[0,B_l]$ of Route $l$ evolves over time:
\begin{equation}
\begin{aligned}
    &\dot{x}_l=\frac{1}{L_l}\left(f_l^\mathrm{in}-f_l^\mathrm{out}\right),\quad l=1,2,\\
    &f_l^\mathrm{in}=\min\{\Phi R_l(\tau(x)),s_l(x_l)\},\\
    &f_l^\mathrm{out}=d_l(x_l)
\end{aligned}
    \label{ch:dynselfish:os}
\end{equation}
where $x=(x_1,x_2)^T$ in \mbox{$\Omega:=[0,\overline{x}_1]\times[0,\overline{x}_2]$}.
Equation \eqref{ch:dynselfish:os} states that the instant variation of traffic density on a route equals the difference between its inflow and its outflow. Specifically, the inflow is the minimum between the fraction of exogenous flow aiming to enter the link, and the supply $s_l$ of the link, which represent the maximum inflow that can be accommodated by the link given its current traffic density. Supply is modeled as a decreasing, Lipschitz continuous function of the link density. Its definition captures the fact that when the route density exceeds its critical threshold, the route capacity progressively shrinks:
\begin{equation}  s_l(x_l):=F_l\cdot\min\biggl\{1,\frac{B_l-x_l}{B_l-C_l}\biggl\}\quad l=1,2.
    \label{sup}
\end{equation}
The link outflow corresponds to the route demand, which represents the flow attempting to leave the link given the current density. The route demand is modeled as an increasing Lipschitz continuous function of the density, saturating at the route's capacity when the critical density is reached:
\begin{equation}
    d_l(x):=\min\{v_lx_l,F_l\},\:\:\mathrm{with} \quad v_l=F_l/C_l\:\:\mathrm{(km/h)},\:\: l=1,2.
    \label{dem}
\end{equation}
This supply and demand mechanism inspired by Daganzo’s cell transmission model \cite{daganzo1994,daganzo1995} acts as capacity and volume constraint on the link, regulating the flow that enters and leaves the link according to its traffic density.

The access road to the network also presents a traffic dynamics, captured by the following equation that describes how the density $x_A\in[0,+\infty)$ evolves over time:
\begin{equation}
    \dot{x}_A=\frac{1}{L_A}\left(\Phi-\sum_{l=1,2}f_l^\mathrm{in}\right),\quad l=1,2,
    \label{buffer}
\end{equation}
The access road does not incorporate the supply and demand mechanisms of the two routes; instead, it functions as a buffer with infinite capacity and jam density, designed to supply the exogenous flow $\Phi$ into Route~1 and Route~2 while capturing potential spillbacks due to congestion in the two routes.


\subsection{Routing ratios and travel times}
Each route is characterised by a {\em travel time} that, consistently the literature~\cite{kachroo2016,trb}, is assumed to be a strictly increasing $C^1$ function $\tau_l:[0,B_l]\rightarrow\mathbb{R}_{>0}$ of density $x_l,\ l=1,2$.
\emph{Routing ratios} $R_l$ quantify the ratio of demand directed toward each route and they are modeled in order
to account for the presence of \emph{informed users}, i.e., users relying on fastest-route recommendations from a some navigation system. Assuming that a fraction of users $\alpha\in(0,1]$, which we refer to as the \textit{penetration rate}, is influenced by the routing recommendations, whereas the remaining fraction $1-\alpha$ splits according to fixed routing ratios $r^0=(r_1^0$, $r_2^0),\ r_1^0+r_2^0=1$, we write routing ratios as follows
\begin{equation}
    R_l(\tau(x)):=(1-\alpha)r_l^0+\alpha r_l(\tau(x)),\quad l=1,2.
    \label{ch:dynselfish:rr}
\end{equation}
The second term in \eqref{ch:dynselfish:rr} captures the behavior of the informed users with respect to travel times, where $r_l(\tau):\mathbb{R}_{>0}^2\rightarrow[0,1]$ is a globally Lipschitz $C^1$ function, $l=1,2$. These functions are assumed to satisfy \mbox{$r_1(\tau)+r_2(\tau)=1,\ \forall \tau\in\mathbb{R}_{>0}^2$}, so that \mbox{$0\leq R_l(\tau)\leq1,\ \forall \tau\in\mathbb{R}_{>0}^2,\ l=1,2$}, and that \mbox{$R_1(\tau)+R_2(\tau)=1$}.

We now focus on a special class of routing ratios, \textit{monotone routing ratios}, originally introduced in~\cite{como2013a}. 
\begin{defi}[Monotone ratios]\label{def1}
    Routing ratios \eqref{ch:dynselfish:rr} are said to be \textit{monotone} if 
    \begin{equation}
        \frac{\partial R_l}{\partial \tau_j}\geq0,\quad l\neq j,\ l=1,2.
        \label{mrr}
    \end{equation}
    If the inequality in \eqref{mrr} is strict, then they are \textit{strictly monotone}.
\end{defi}
Monotone routing ratios ensure that the higher the travel time on a route, the fewer informed users are directed toward it. This dependence captures the fact that app-informed users seek to minimize their travel time. 
Since travel times are strictly increasing in the traffic densities, for monotone routing ratios we have 
    \begin{equation}
        \frac{\partial R_l(\tau(x))}{\partial x_j}\geq0,\quad
        \frac{\partial R_l(\tau(x))}{\partial x_l}\leq0,\quad l\neq j,\ l=1,2.
    \end{equation}

In what follows, we assume:
\begin{assum}[Strict monotonicity]
    Routing ratios \eqref{ch:dynselfish:rr} are strictly monotone.
    \label{ch:dynselfish:ass4}
\end{assum}
A well-established model of routing ratios, which falls under this assumption, is the logit routing ratios:

\begin{equation}
    R_l(\tau)=(1-\alpha)r_l^0+\frac{\alpha}{1+\frac{r_j^0}{r_l^0}\exp\left(\frac{1}{\eta}\left(\tau_l-\tau_j\right)\right)},
\label{lrr}
\end{equation}
$l\neq j,\ l=1,2$. 
The state-dependent term takes the form of the \textit{logit choice model} \cite{sandholm,daganzo1977,akiva}, where $\eta>0$ is the so-called {\em noise} parameter. When $\eta$ approaches zero, the logit model approximates a best response dynamics. The noise parameter can be interpreted as measuring how reliable or how {\em accurate} is the travel-time information provided by the navigation app.
In this work, we mainly interpret the parameter $1/\eta>0$ as measuring the \textit{users' compliance}. 
When $1/\eta\rightarrow 0$, i.e., users' compliance is very low, users do not really exploit the information and the demand splitting stays close to $r^0$. On the contrary, when $1/\eta\rightarrow +\infty$, all users tend to take the route with shortest travel time.  

\subsection{Unsatisfied demand and congestion}

Supply and demand naturally allows to define states that exhibit \textit{unsatisfied demand} and \textit{congestion}. 
When the fraction of traffic demand $\Phi R_l(x)$ directed to a route is less than the route supply $s_l(x)$, then the demand can enter the route freely; the demand is satisfied (S). On the contrary, if $\Phi R_l(x)$ exceeds $s_l(x)$, then the first term in \eqref{td} gets saturated. In this case, the demand is unsatisfied (U). Differently, if the internal demand $d_l(x)$ of the route, i.e., the quantity of drivers that aims at leaving the route, is less than its capacity $F_l$, then the route is in free-flow (F), otherwise it is congested (C). 
Therefore, each route is characterized by four possible \textit{route modes} (see Table~\ref{tab1}), allowing to rewrite \eqref{ch:dynselfish:os}-\eqref{buffer} as follows:  
\begin{equation}
    \dot{x}_l=\begin{cases}
    \dfrac{1}{L_l}\Phi R_l(\tau(x))-v_lx_l, & \text{if } x_l\leq C_l,\ \Phi R_l(\tau(x))\leq s_l(x),\quad \text{ SF}
    \\ \\
    \dfrac{1}{L_l}\left(F_l-v_lx_l\right), & \text{if }x_l\leq C_l,\ \Phi R_l(\tau(x))>s_l(x),\quad \text{ UF}
    \\ \\
    \dfrac{1}{L_l}\left(\Phi R_l(\tau(x))-F_l\right), & \text{if }x_l> C_l,\ \Phi R_l(\tau(x))\leq s_l(x),\quad \text{ SC}
    \\ \\
     \dfrac{F_l(C_l-x_l)}{L_l(B_l-C_l)}, & \text{if }x_l> C_l,\ \Phi R_l(\tau(x))>s_l(x),\quad \text{ UC}
    \end{cases},\quad l=1,2.
    \label{td}
\end{equation}
In the case where the fraction of flow directed toward a route exceeds its supply, then the exogenous flow $\Phi$ is \emph{partially transferred}, and the excess flow accumulates in the buffer A. In order to avoid discussing uninteresting cases, we shall assume that demand does not exceed the network capacity, that is, it is possible to divide the demand so as to satisfy it completely.
\begin{assum}[Satisfiable traffic demand] 
Traffic demand is such that
    \begin{equation}
    \Phi<F_1+F_2.
    \end{equation}
    \label{ass1}
\end{assum}

\begin{table}
\caption{Notation for the four route modes in \eqref{td}. }
\begin{center}
\begin{tabular}{@{}lll@{}}
\hline
   & Route demand     & Traffic regime      \\ \hline 
SF & satisfied & free-flow \\
UF & unsatisfied & free-flow \\
SC & satisfied & congested \\
UC & unsatisfied & congested \\ 
\hline
\end{tabular}
\label{tab1}
\end{center}
\end{table}

\section{Equilibria and stability analysis}\label{sec3}
The following section is devoted to analyzing the behavior of the dynamical system \eqref{ch:dynselfish:os}-\eqref{ch:dynselfish:rr} introduced in Section~\ref{sec2}, aiming to derive insights into the influence of informed users on traffic dynamics. This involves examining the system's asymptotic behavior.

Existence and uniqueness of the solutions of \eqref{ch:dynselfish:os}-\eqref{ch:dynselfish:rr} are ensured by the fact that the system is Lipschitz continuous. 
We remark that \eqref{ch:dynselfish:os}-\eqref{ch:dynselfish:rr} is well-posed with respect to $\Omega\times[0,+\infty)$, i.e., $\Omega\times[0,+\infty)$ is positively invariant. Indeed, the vector field is always pointing inward on the boundaries of $\Omega\times[0,+\infty)$. 
Hence, if $x_0\in\Omega\times[0,+\infty)$, then $\gamma(t,x_0)\in\Omega\times[0,+\infty)$, $\forall t\geq0$.
In the following, we perform a stability analysis of the system. It is important to note that the dynamics of the two routes are entirely unaffected by the behavior of the access road buffer A. Hence, the overall system behavior is exclusively determined by the traffic dynamics of the two routes. For this reason, our analysis will focus solely on the two-route subsystem, and from this analysis, we will eventually deduce the dynamics of the access road A.

Now, observe that \eqref{ch:dynselfish:os}-\eqref{dem} and \eqref{ch:dynselfish:rr} is a state-dependent switched system, where each \textit{system mode} can be defined as a route combination mode $\text{M}_1\text{-M}_2$, where $\text{M}_1,\text{M}_2\in\{\mathrm{SF},\mathrm{UF},\mathrm{SC},\mathrm{UC}\}$ indicate the modes of Route~$1$ and Route~$2$, respectively. 
We will refer to the sub-system associated with system mode $\text{M}_1\text{-M}_2$ with the notation $\Sigma^{\text{M}_1\text{-M}_2}$ and we will indicate as $\Omega^{\text{M}_1\text{-M}_2}$ the open region of the state space where sub-system $\Sigma^{\text{M}_1\text{-M}_2}$ is active. Notice that $\Omega^\text{UF-UF}$ will always be empty because of Assumption \ref{ass1}. 

We are now going to present some preliminary results that allow us to simplify the stability analysis of \eqref{ch:dynselfish:os}-\eqref{dem} and \eqref{ch:dynselfish:rr}. Before presenting them, let us define the two following regions:
\begin{equation}
    P:=\{x\in\Omega|\ 0\leq x_1\leq C_1,\,0\leq x_2\leq C_2\},\ \ Q:=\Omega\setminus P.
\end{equation}
Notice that $P=\overline{\Omega}^{\text{SF-SF}}\cup \overline{\Omega}^{\text{UF-SF}}\cup \overline{\Omega}^{\text{SF-UF}}$, and that all regions $\Omega^{\text{M}_1\text{-M}_2}$ such that $\text{M}_1\in\{\text{SC},\text{UC}\}$ or $\text{M}_2\in\{\text{SC},\text{UC}\}$ are contained in $Q$. Given a domain $\mathcal{D}\subseteq\mathbb{R}^d$ and a system of differential equations $\dot{y}=g(y)$ with \mbox{$g:D\rightarrow \mathbb{R}^d$} and having a unique solution $\varphi(t,y_0), t\geq0$ from each initial condition $y_0\in\mathcal{D}$, we say that $\mathcal{Y}\subseteq\mathcal{D}$ is \textit{positively invariant} if $y_0\in\mathcal{Y}$ implies that $\varphi(t,y_0)\in\mathcal{Y},\ \forall t\geq0$. We say that $\mathcal{Y}$ is \textit{globally attractive} if, for any open neighbourhood $U$ of $\mathcal{Y}$, $\forall y_0\in\mathcal{D}$, there exists a time $\tau (y_0)>0$ such that \mbox{$\varphi(t,y_0)\in U ,\forall t >\tau (y_0)$}.

\begin{lem}[Properties of region $P$]
    Given Assumption \ref{ass1}, region $P$ is positively invariant and globally attractive.
    \label{lem1}
\end{lem}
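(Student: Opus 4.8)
The plan is to establish the two properties separately, using the explicit mode decomposition \eqref{td} together with the observation that supply and demand coincide at the critical density: at $x_i=C_i$ one reads off from \eqref{sup} and from the definition of $D_i$ that $S_i(C_i)=D_i(C_i)=F_i$.

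For the invariance of $P$, I would verify that the vector field is subtangent to $P$ along its boundary and invoke Nagumo's condition. Since $C_i<B_i$, we have $P=[0,C_1]\times[0,C_2]$, and the only faces through which a trajectory could leave $P$ towards $T$ are $\{x_1=C_1\}$ and $\{x_2=C_2\}$; on the faces $x_i=0$ the field already points inward because $\Omega$ is positively invariant. The key computation is that on $\{x_i=C_i\}$,
\begin{equation*}
\dot{x}_i=\min\{\phi R_i(x),F_i\}-F_i\leq0,
\end{equation*}
so no trajectory crosses $x_i=C_i$ from below and $P$ is positively invariant. The same computation shows that each half-strip $\{x\in\Omega:x_i\leq C_i\}$ is positively invariant, a fact I will reuse below.

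For global attractivity, I would show that the excess of each density above its critical value decays exponentially. Fix a route with $x_i>C_i$, so that it is in mode SC or UC. Setting $k_i:=F_i/(B_i-C_i)>0$, mode UC gives directly $\dot{x}_i=-k_i(x_i-C_i)$ from \eqref{td}, while in mode SC the activation condition $\phi R_i\leq S_i$ combined with $S_i(x_i)=k_i(B_i-x_i)$ yields
\begin{equation*}
\dot{x}_i=\phi R_i-F_i\leq k_i(B_i-x_i)-F_i=-k_i(x_i-C_i).
\end{equation*}
Hence $\dot{x}_i\leq-k_i(x_i-C_i)$ whenever $x_i>C_i$. Introducing $V_i(t):=\max\{0,x_i(t)-C_i\}$, which is absolutely continuous and, by the positive invariance of $\{x_i\leq C_i\}$, satisfies $\dot{V}_i\leq-k_iV_i$ almost everywhere, the comparison lemma gives $V_i(t)\leq V_i(0)e^{-k_it}$. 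As this bound is decreasing in $t$, both $V_1$ and $V_2$ drop below any $\varepsilon>0$ after a finite time and remain there; since every open neighbourhood $U$ of the compact set $P$ contains a set of the form $\{x\in\Omega:x_i<C_i+\varepsilon,\ i=1,2\}$, this proves global attractivity.

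I expect the main obstacle to be the uniform handling of the two congested modes. The bound in mode SC is not immediate: it relies on replacing $\phi R_i$ by the linear supply $k_i(B_i-x_i)$ and then noticing that the resulting contraction rate is exactly the rate $k_i$ already appearing in mode UC. Recognising this common rate is what allows the argument to proceed coordinate by coordinate, with a single exponential estimate per density, and to avoid any delicate case analysis on how and when the trajectory switches between SC and UC.
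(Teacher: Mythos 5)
Your proposal is correct and follows essentially the same route as the paper: positive invariance via the boundary computation $\dot{x}_i=\min\{\phi R_i,F_i\}-F_i\leq 0$ at $x_i=C_i$, and global attractivity via the exponential contraction $\dot{x}_i\leq -\frac{F_i}{B_i-C_i}(x_i-C_i)$ in the congested modes. Your use of $V_i=\max\{0,x_i-C_i\}$ with the comparison lemma is a slightly cleaner way to handle switching between SC and UC than the paper's informal statement, but it is the same argument.
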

\begin{proof}
    For positive invariance, let $x(t)$ be a solution. If $x(t)$ enters $P$, both routes will be either in mode SF or UF. From \eqref{td}, we see that $x_l=C_l\Rightarrow\dot{x}_l\leq0,\ l=1,2$, ensuring that trajectories cannot escape $P$, thus guaranteeing that $x(t)\in P$, for all $t\geq\theta$. 
    
    For global attractivity, consider now $x\in Q$.
     Then from the definition of $Q$, at least one of the two routes is in mode SC or UC. If Route~$l$ is in mode SC or UC, we can write
        \begin{equation*}
           \dot{x}_l\leq-\frac{F_l}{L_l(B_l-C_l)}x_l+\frac{F_lC_l}{L_l(B_l-C_l)}.
        \end{equation*}
      This inequality implies the convergence to $P$.
    \end{proof}

    \begin{rem}[Traffic interpretation of the properties of region $P$]
    From a traffic perspective, the positive invariance of $P$ implies that, in our model, congestion cannot arise from a free-flow condition. Also, global attractiveness implies that congestion always vanishes over time. This property is due to the implicit assumption of infinite capacity at the destination node and unconstrained route outflows.
    \label{rem1}
\end{rem}
We shall prove that system \eqref{ch:dynselfish:os}-\eqref{dem} and \eqref{ch:dynselfish:rr} admits a globally asymptotically stable equilibrium, contained in region $P$. The proof relies on the fact that \eqref{ch:dynselfish:os}-\eqref{dem} and \eqref{ch:dynselfish:rr} admits a unique equilibrium and is monotone, in the following sense \cite{Hirsch}. 
\begin{defi}[Monotone system]
    A system $\dot{y}=h(y)$ with $h:\mathbb{R}^d\rightarrow\mathbb{R}^d$ is said to be \textit{monotone} if $y_0\leq \Tilde{y}_0$ 
    implies that $\varphi_t(y_0)\leq \varphi_t(\Tilde{y}_0)$, $\forall t\geq0$, where $\varphi_t(y^\circ)$ is the solution to $\dot{y}=h(y)$ with initial condition $y(0)=y^\circ$.
\end{defi}

\begin{prop}[Monotonicity]
   Given Assumptions \ref{ch:dynselfish:ass4} and \ref{ass1}, the system \eqref{ch:dynselfish:os}-\eqref{dem} and \eqref{ch:dynselfish:rr} is monotone.  
    \label{p1}
\end{prop}
\begin{proof}
    See Appendix \ref{app1}.
\end{proof}
\noindent Monotonicity imparts a high degree of structure to the system,
making it easier to establish its stability properties.
Before proceeding, we make the following assumption.
\begin{assum}[Demand upper-bound]
    The following conditions hold:
    \begin{equation}
        \Phi<v_lB_l,\quad l=1,2.
        \label{eqass6}
    \end{equation}
    \label{ass6}
\end{assum}
Although condition \eqref{eqass6} represents a formal constraint, it will not be restrictive in practical cases.

\begin{thm}[Global Asymptotic Stability]
    Given Assumptions \ref{ch:dynselfish:ass4}, \ref{ass1}, and \ref{ass6}, system \eqref{ch:dynselfish:os}-\eqref{dem} and \eqref{ch:dynselfish:rr} admits a globally asymptotically stable equilibrium $\overline{x}\in P$ .
    \label{ch:dynselfish:t2}
\end{thm}
\begin{proof}
    See Appendix \ref{app1}
\end{proof}
Theorem~\ref{ch:dynselfish:t2} characterizes the asymptotic behavior of \eqref{ch:dynselfish:os}-\eqref{dem} and \eqref{ch:dynselfish:rr}, ensuring that all solutions converge to a unique equilibrium $\overline{x} \in P$ in the free-flow regime. However, the convergence of the two-route sub-system does not fully determine the asymptotic behavior of the overall system, including buffer A. Specifically, depending on which sub-region within region P the equilibrium lies, the dynamics of A exhibit different behaviors. If the equilibrium falls within sub-region $\Omega^{\text{SF-SF}}$, the density of A converges to a finite value. Conversely, if the equilibrium is in sub-regions $\Omega^{\text{UF-SF}}$ or $\Omega^{\text{SF-UF}}$, the density of A becomes unbounded, leading the overall system to diverge. Therefore, it is crucial to establish the nature of the equilibrium, particularly whether it involves partial demand transfer, to gain insights into the traffic dynamics and assess the impact of the app's recommendations on traffic and network efficiency at steady-state.

\begin{rem}[Beyond piece-wise linear supply and demand functions]
Theorem~\ref{ch:dynselfish:t2} assumes that the supply and demand functions are piece-wise linear as per Equations \eqref{sup}-\eqref{dem}. However, this result can be extended to more general supply and demand functions. Specifically, it is possible to show (with minor changes to the proof) that the result still holds when the supply and demand functions for $l=1,2$ take the form
\begin{equation*}
s_l(x_l)=\min\{F_l, \Tilde{s}_l(x_l)\},\quad d_l(x_l)=\min\{F_l,\Tilde{d}_l(x_l)\}
\end{equation*}
where $\Tilde{s}_l:[0,B_l]\rightarrow\mathbb{R}_{>0}$ is a strictly decreasing $C^1$ function such that $\Tilde{s}_l(C_l)=F_l,\ \Tilde{s}_l(\overline{x}_l)=0$, and $\Tilde{d}_l:[0,B_l]\rightarrow\mathbb{R}_{>0}$ is a strictly increasing and $\Tilde{d}_l(0)=0,\ \Tilde{d}_l(C_l)=F_l$. In this case, Assumption \ref{ass6} should be replaced by $\Phi<\Tilde{d}_l^{-1}(B_l)$. 
\end{rem}

\section{Equilibrium efficiency for logit routing}\label{sec4}
This section is dedicated to the analysis of the traffic implications of the model, focusing on the study of its unique equilibrium: the latter being globally asymptotically stable, its properties fully describe the steady-state of the system. 
Studying these properties in full generality is made hard by the lack of an analytic characterization of the equilibrium.
For this reason, from now on we shall study system \eqref{td} with logit routing ratios~\eqref{lrr} and affine travel times, i.e.,
\begin{equation}
    \tau_l(x_l)=a_l\frac{x_l}{B_l}+\frac{L_l}{v_l},\quad l=1,2.
    \label{aft}
\end{equation}
Affine travel time functions are largely used in the traffic literature, especially when considering a free-flow regime on a route \cite{wollenstein2,wollenstein3,friesz}.
These two assumptions allow for the detailed analysis of $\overline{x}$ in the {\em high compliance} and {\em low compliance} regimes.
Under high user compliance, i.e., when $\eta\rightarrow0$, we will prove that the equilibrium converges to the Wardrop equilibrium of an underlying non-atomic routing game: the properties of the Wardrop equilibrium can be extended by continuity to the equilibrium\footnote{The fact that the equilibria of a logit-based dynamics converge to the Wardrop equilibria of an associated game has already been exploited in the literature \cite{sandholm}, including for similar traffic models that did not account for route's capacity saturation~\cite{cianfanelli2022}.}.
For low compliance, we will show that a linearization of \eqref{td} equipped with \eqref{lrr} provides a suitable approximation.

The main criterion that will be used to evaluate the efficiency of the equilibrium is to establish whether it features partial demand transfer or not. In order to disregard trivial cases in which unsatisfied demand arises independently of routing recommendations, we make the following assumption:
\begin{assum}[Fixed routing ratios]    \label{ass5}
    The following condition holds:
    \begin{equation}  
        F_l>(1-\alpha)\Phi r_l^0,\quad l=1,2,\quad \forall\alpha.
    \end{equation}
\end{assum}

For fully transferring equilibria, we will also consider the total travel time as a second criterion. In particular, we will evaluate the Price of Anarchy,  which corresponds to the ratio between the total travel time attained at equilibrium and the minimum total travel time attained by the social optimum:
\begin{equation}
    \poa(x^*)=\frac{\sum_{l=1,2}\Phi R_l(x^*)\tau_l(x_l^*)}{\sum_{l=1,2}\Phi R_l^O\tau_l(x_l^O)},
    \label{total}
\end{equation}
where $(R_l^O,x^O)$ is such that $R_l^O=\frac{v_lx_l^O}{\Phi}$, $\Phi R_l\leq F_l,\ l=1,2$, and minimizes the total travel time. We will see that this second criterion leads us to conclusions similar to those in \cite{thai,ibp,th:cabannes}.

\subsection{High drivers' compliance}
As previously mentioned, when user compliance is very high, the properties of the equilibrium of \eqref{ch:dynselfish:os}-\eqref{dem} and \eqref{ch:dynselfish:rr} can be inferred from those of the Wardrop equilibrium of an underlying non-atomic routing game. We will first introduce the game and deduce some of its properties, then extend them by continuity to $\overline{x}$. 

\subsubsection{Underlying routing game}
Consider the two-route sub-network introduced in Section~\ref{sec2} (see Figure~\ref{grp}).  Assume that the system is subject to a demand of users $\Phi$, aiming at crossing it to reach the destination node. Suppose that $\Phi$ consists of both informed users and uninformed users. Informed users represent a fraction $\alpha\in[0,1]$, whereas the remaining part is represented by uninformed users. As for system \eqref{ch:dynselfish:os}, uninformed users split on the two routes according to prior beliefs $r^0=(r_1^0,r_2^0)$, while informed users choose their route to minimize their travel time, according to the travel time functions \eqref{aft}. The splitting of the exogenous flow between the two routes is captured by the routing ratios $R_1$, $R_2$, so that the flow sent on Route 1 and Route 2 is $\Phi R_1$ and $\Phi R_2$, respectively. Observe that the routing ratios of this game are such that 
\begin{equation}
    R_1\in[(1-\alpha)r_1^0,(1-\alpha)r_1^0+\alpha],\quad R_2=1-R_1,
\end{equation}
due to the fact that the splitting of uninformed users is fixed.

The goal of non-atomic routing games is to identify, among the steady-state flows, those that are likely to emerge from user interaction. Therefore, we will also focus exclusively on steady-state flows, which, according to \eqref{ch:dynselfish:os}, correspond to flows that satisfy the equilibrium condition
\begin{equation*}
    \min\{\Phi R_l,s_l(x_l)\}=d_l(x_l),\qquad l=1,2.
\end{equation*}
Notice that the above condition implies two important facts. First, no steady-state flow can be characterized by a congested route, i.e., all steady-sate flows of the game are included in region $P$, which is consistent with the behavior of system \eqref{ch:dynselfish:os}. Second, route flows induce the following densities on the two routes:
\begin{equation}
    x_l^R=\begin{cases}
        \frac{\Phi R_l}{v_l}, & \Phi R_l<F_l\\
        C_l, & \Phi R_l\geq F_l
    \end{cases},\qquad l=1,2.
    \label{ch:dynselfish:xW}
\end{equation}
\begin{defi}[Underlying routing game (URG($\alpha$))]
    The underlying routing game URG($\alpha$) of system \eqref{ch:dynselfish:os} is the quadruplet $(\mathcal{G},\Phi,\tau,\alpha)$, where $\mathcal{G}$ is the network which the game is defined on, $\Phi$ is the demand for the network, $\tau$ is the set of link travel times functions, and $\alpha$ is the penetration rate of informed users.
\end{defi}
\begin{defi}[Wardrop equilibrium (WE)]
    A pair $(R^W(\alpha),x^W(\alpha))$ of the URG($\alpha$) is a \emph{Wardrop equilibrium (WE($\alpha$))} if and only if 
    \begin{equation}
        R_l^W(\alpha)>(1-\alpha)r_l^0\ \Rightarrow\ \tau_l(x_l^W(\alpha))\leq \tau_k(x_k^W(\alpha)),\quad k\neq l,\ l=1,2.
    \end{equation}
\end{defi}

In this case, the WE is expressed as a function of $\alpha$, to emphasize the influence of the penetration rate on its shape.
We next characterize the Wardrop equilibrium $\mathrm{WE}(\alpha)$ of the NRG. Let us define the following quantities to ease the notation:

Let us define the following quantities to ease the notation:
\begin{equation*}
   c_l:=\frac{a_l}{v_lB_l},\quad b_l:=\frac{L_l}{v_l},\quad l=1,2.
\end{equation*}

For convenience, we make the following assumption on the route travel times.
\begin{assum}[Route labeling]
    Route $1$ has the shortest travel time route for $\alpha=0$, i.e., $b_1+c_1\Phi r_1^0\leq b_2+c_2\Phi r_2^0$.
    \label{me1}
\end{assum}
\noindent For ease in stating our results, we also define the following quantities:
\begin{equation}
    \overline{\Phi}:=F_1\left(1+\frac{c_1}{c_2}\right)-\frac{b_2-b_1}{c_2},
    \label{of}
\end{equation}
\begin{equation}
\alpha^\mathrm{M}:=\frac{1}{\Phi r_2^0}\frac{c_2\Phi r_2^0-c_1\Phi r_1^0+b_2-b_1}{c_1+c_2},
    \label{sa}
\end{equation}
\begin{equation}
\alpha^\mathrm{U}:=\frac{F_1-\Phi r_1^0}{\Phi r_2^0},
    \label{ba}
\end{equation}
\begin{equation}
\alpha^\mathrm{UM}:=1-\frac{1}{\Phi r_2^0}\left(\frac{c_1}{c_2}F_1-\frac{b_2-b_1}{c_2}\right).
\end{equation}

 \begin{lem}[Unsatisfied demand, Wardrop equilibrium]
    Suppose Assumptions \ref{ass1}, \ref{ass6}, \ref{ass5} and \ref{me1} holds. 
    Then, the underlying routing game admits a unique Wardrop equilibrium $(R^W(\alpha),x^W(\alpha))$ and the following characterization holds:
    \begin{itemize}
        \item If $\Phi\leq\overline{\Phi}$, then no route is affected by unsatisfied demand. Moreover,
        \begin{itemize}
            \item if $\alpha\leq\alpha^\mathrm{M}$, then
            \begin{equation}
        \begin{aligned}
            &R_1^W(\alpha)=\alpha+(1-\alpha) r_1^0,\\
            &R_2^W(\alpha)=(1-\alpha) r_2^0,\\
            &x^W(\alpha)=\left(\frac{\Phi R_1^W(\alpha)}{v_1},\frac{\Phi R_2^W(\alpha)}{v_2}\right);
        \end{aligned}
        \label{allin1}
        \end{equation}
        \item if $\alpha>\alpha^\mathrm{M}$, then
        \begin{equation}
        \begin{aligned}
            &R_l^W(\alpha)=
                     \frac{c_{k}\Phi+b_{k}-b_l}{c_l+c_{k}},\quad k\neq l,\ l=1,2,\\&x^W(\alpha)=\left(\frac{\Phi R_1^W(\alpha)}{v_1},\frac{\Phi R_2^W(\alpha)}{v_2}\right).
        \end{aligned}
             \label{awe}
        \end{equation}
        \end{itemize}
        \item If $\Phi>\overline{\Phi}$, then Route $1$ will be affected by unsatisfied demand for $\alpha>\alpha^\mathrm{U}$. Moreover:
        \begin{itemize}
            \item if $\alpha\leq\alpha^\mathrm{U}$, then $(R^W(\alpha),x^W(\alpha))$ is as in \eqref{allin1};
       \item if $\alpha\leq\alpha^\mathrm{U}$, then $R^W(\alpha)$ is as in \eqref{allin1} and 
       \begin{equation}
           x^W(\alpha)=\left(x_1^c,\frac{(1-\alpha)\Phi r_2^0}{v_2}\right);
       \end{equation}
        \item if $\alpha>\alpha^\mathrm{UM}$, then
        \begin{equation}
            \begin{aligned}
            &R_1^W(\alpha)=\Phi-\frac{c_1}{c_2}F_1+\frac{b_2-b_1}{c_2},\\ &R_2^W(\alpha)=\frac{c_1}{c_2}F_1-\frac{b_2-b_1}{c_2},\\& x^W(\alpha)=\left(x_1^c,\frac{(1-\alpha)\Phi r_2^0}{v_2}\right);
            \end{aligned}
            \label{sawe}
            \end{equation}
        \end{itemize}
    \end{itemize}
    \label{lem2}
 \end{lem}
 \begin{proof}
     From the definition of Wardrop equilibrium, informed users distribute only on shortest travel time routes. From Assumption \ref{me1}, it is clear that for $\alpha$ small enough, $\alpha\Phi$ will distribute entirely on Route $1$. As $\alpha$ increases, the demand routed toward Route $1$ increases as well, until either travel times equalize or Route $1$ gets saturated.  $\alpha^M$ in expression \eqref{sa} is the maximum value of $\alpha$ such that, by allocating all informed users on Route $1$, Route $1$ still is the shortest travel time route, and can be calculated by solving $\tau_1((1-\alpha)\Phi r_1^0+\alpha\Phi)=\tau_2((1-\alpha)\Phi r_2^0)$.
     $\alpha^U$ in expression \eqref{ba}, instead, is the maximum value of $\alpha$ such that, by allocating all informed users on Route $1$, the capacity on Route $1$ is not exceeded and can be calculated by solving $(1-\alpha)\Phi r_l^0+\alpha\Phi=F_l$. One can verify that $\Phi\leq\overline{\Phi}\iff \alpha^\mathrm{M}\leq\alpha^\mathrm{U}$.
     
     This leads to consider two cases, depending on the value of $\Phi$. If $\Phi\leq\overline{\Phi}$, then  by increasing $\alpha$, travel times equalize before Route $1$ gets congested, since $\alpha^\mathrm{M}\geq\alpha^\mathrm{U}$. Once travel times are equal, $R^W(\alpha)$ takes the form in \eqref{awe}, which no longer depends on $\alpha$. Hence, increasing $\alpha$ will affect no more the shape of $R^W(\alpha)$, as the additional app-informed users will distribute on the two routes so as to keep travel times even. Expression \eqref{awe} can be retrieved by imposing the two routes' travel times to be equal. On the contrary, if $\Phi>\overline{\Phi}$, then Route $1$ gets congested before travel times equalize. Nevertheless, analogously to the previous case, the informed demand keeps selecting Route $1$, which is still the shortest travel time route, until travel times equalize, which now happens at $\alpha^\mathrm{UM}$. Again, one can verify that $\Phi\geq\overline{\Phi}\iff \alpha^\mathrm{U}\leq\alpha^\mathrm{UM}$. After travel times even out, $R^W(\alpha)$ takes the form in \eqref{sawe} and further increase do not affect the demand distribution anymore. Expression \eqref{sawe} can be retrieved by imposing the two routes' travel times to be equal, accounting for the fact that Route $1$ is saturated.

     The uniqueness of $R^W(\alpha)$ follows from the fact that the above cases are exhaustive and mutually exclusive. To conclude, the expressions of $x^W(\alpha)$ easily follow from \eqref{ch:dynselfish:xW}.
 \end{proof}

 \subsubsection{Convergence to the Wardrop equilibrium and its implications}
We now prove how $x^*$ converges to $x^W(\alpha)$.
\begin{prop}[Equilibrium approximation]
    Let Assumptions 
    \ref{ass1}, \ref{ass6}, \ref{ass5} hold. The unique equilibrium $x^*$ of \eqref{ch:dynselfish:os} equipped with logit routing ratios \eqref{lrr} converges to $x^W(\alpha)$, as $\eta\rightarrow0$.
\end{prop}
\begin{proof}
From the proof of Theorem \ref{ch:dynselfish:t2}, $x^*$ corresponds to the fixed point of the map
\begin{equation*}
    G_l(x_l,\eta)=\frac{\min\{\Phi R_l(\tau(x_l)),F_l\}}{v_l},\quad l=1,2,
\end{equation*}
with $R_l(\tau(x))$ as in \eqref{lrr}. Consider a sequence $(\eta_n)_{n\in\mathbb{N}}|\eta_n\rightarrow0$. Let ${x^*}^{(n)}$ the unique equilibrium associated with the corresponding instance of \eqref{ch:dynselfish:os}. Since ${x^*}^{(n)}\in P,\ \forall n$, and $P$ is compact, the sequence $\{{x^*}^{(n)}\}_{n\in\mathbb{N}}$ is bounded. From compactness, every sequence admits a converging sub-sequence. Let $E\subseteq P$ be the set of accumulation points of all converging sequences of equilibria of \eqref{ch:dynselfish:os}. Pick $e\in E$ and the corresponding sequence $\{{x^*}^{(n_k)}\}_{k\in\mathbb{N}}$. Assume that one of the two routes is sub-optimal at $e$, i.e., $\exists l\,|\,\Tilde{\tau}_l(e_l)>\Tilde{\tau}_{l'}(e_{l'})$, for some $l$. Then
\begin{equation}
    e_l=\lim_{\eta\rightarrow0}{x^*}^{(n_k)}=\lim_{\eta\rightarrow0} G_l({x^*}^{(n_k)},\eta)=\frac{(1-\alpha)\Phi r_l^0}{v_l}.
\end{equation} 
Hence, $e$ corresponds to $x^W(\alpha)$, since none of the informed demand $\alpha\Phi$ is allocated on the sub-optimal route $l$. Since $(R^W(\alpha),x^W(\alpha))$ is the unique WE of the underlying routing game, it follows that all sequences converge to $e=x^W(\alpha)$. Again, compactness ensures that all the sub-sequences of $\{{x^*}^{(n_k)}\}_{n\in\mathbb{N}}$ admit a sub-sequence converging to $x^W(\alpha)$. This is equivalent to say that $\{{x^*}^{(n_k)}\}_{n\in\mathbb{N}}$ converges to $x^W(\alpha)$, as well.
\end{proof}


Since $x^*$ converges to $x^W(\alpha)$, by continuity we can extend the properties of the latter to the former.
\begin{corol}[Partial demand transfer for high compliance]
Let Assumptions 
\ref{ass1}, \ref{ass6}, \ref{ass5} and \ref{me1} hold. If $\Phi>\overline{\Phi}$ and $\alpha>\alpha^\mathrm{U}$, then, for small enough $\eta$, the equilibrium $x^*$ of \eqref{ch:dynselfish:os}-\eqref{dem} and \eqref{ch:dynselfish:rr} is affected by partial demand transfer (Route~1 is saturated) and $x_A\rightarrow+\infty,\ t\rightarrow+\infty$.
\label{ch:dynselfish:cor1}
\end{corol}
Lemma \ref{lem2} and Corollary \ref{ch:dynselfish:cor1} highlight the significance of the parameters $\alpha$ and $\Phi$ to the problem, showing that a higher penetration rate increases the system's susceptibility to partial demand transfer. Moreover, a higher traffic demand heightens the system's sensitivity to the penetration rate, lowering the threshold $\alpha^\mathrm{U}$ beyond which partial demand transfer occurs.

We now investigate what impact the penetration rate has on the Price of Anarchy, under Assumptions \ref{ass5} and \ref{me1}. We will first perform the analysis on $(R^W(\alpha),x^W(\alpha))$ and then extend it by continuity to $x^*$ when \eqref{ch:dynselfish:os}-\eqref{dem} and \eqref{ch:dynselfish:rr} is equipped with logit routing ratios and affine travel times, in the limit of high compliance. Since we consider $\poa$ meaningful only when there is full demand transfer, we assume $\Phi\leq\overline{\Phi}$. 
With abuse of notation, we will write
\begin{equation*}
    \poa(\alpha)=\frac{\sum_{l=1,2}\Phi R_l^W(\alpha)\tau_l(x_l^W(\alpha))}{\sum_{l=1,2}\Phi R_l^O\tau_l(x_l^O)}.
\end{equation*}
\begin{prop}[Price of anarchy in URG($\alpha$)]\label{p3} Suppose that Assumptions \ref{ass5} and \ref{me1} holds and that $\Phi\leq\overline{\Phi}$.
Then, $\poa(\alpha)$ is strictly convex in $\alpha$ in $[0,\alpha^\mathrm{M}]$ and constant for $\alpha>\alpha^\mathrm{M}$. Moreover, let 
\begin{equation}
    \alpha^{\text{opt}}:=\frac{1}{\Phi r_2^0}\frac{2c_2\Phi r_2^0-2c_1\Phi r_1^0+b_2-b_1}{2(c_1+c_2)}.
\end{equation}
Then,
\begin{itemize}
    \item if $\alpha^{\text{opt}}<0$, then  $\poa(\alpha)$ is increasing in $[0,\alpha^\mathrm{M}]$ and constant in $[\alpha^\mathrm{M},1]$;
    \item if $\alpha^{\text{opt}}\in[0,\alpha^\mathrm{M}]$, then $\poa(\alpha)$ attains its minimum at $\alpha^{\text{opt}}$;
    \item if $\alpha^{\text{opt}}> \alpha^\mathrm{M}$, then $\poa(\alpha)$ is decreasing in $[0,\alpha^\mathrm{M}]$ and constant in $[\alpha^\mathrm{M},1]$.
\end{itemize}
\end{prop}
\begin{proof}
Strict convexity in $[0,\alpha^\mathrm{M}]$ can be checked through a second order condition. 
The fact that $\poa(\alpha)$ is constant for $\alpha>\alpha^\mathrm{M}$ follows from $R^W(\alpha)$ is constant for $\alpha>\alpha^\mathrm{M}$. The expression of $\alpha^{\text{opt}}$ can be retrieved by solving $\partial_\alpha \poa(\alpha)=0$.
\end{proof}

One can verify that $\alpha^\text{opt}<\alpha^\mathrm{M}$ if and only if $b_1<b_2$, which means that Route $1$ is faster than Route $2$ when both are empty. 
Therefore, in this case an excessive penetration rate, notably $\alpha>\alpha^\text{opt}$, leads to an increased number of users favoring the shortest travel time route. Consequently, this elevates the average travel time for users, thereby reducing the efficiency of the network.

Considering Proposition 2 and that the PoA is a continuous function of the density, the above characterization of the average travel time at Wardrop equilibrium as a function of the penetration rate constitutes a good approximation of the average travel time at $x^*$  equipped with logit routing ratios \eqref{lrr}, for sufficiently small $\eta$. 

\subsection{Low drivers' compliance}
So far we have analyzed $x^*$ under logit routing ratios \eqref{lrr}, assuming high users' compliance. The case of low compliance can be addressed by studying a suitable linearization of the system. Its analysis yields results that are qualitatively consistent with those for high compliance.

Assume that $|\tau_1-\tau_2|/\eta\rightarrow0$, i.e., the argument of the exponential in \eqref{lrr} approaches zero. Then, \eqref{lrr} admits the following first-order approximation:
\begin{equation}
    R_l(\tau)=r_l^0+\frac{\alpha r_l^0r_j^0}{\eta}\left(\tau_j-\tau_l\right),\quad i\neq j,\ l=1,2.
    \label{foa}
\end{equation}
For Eq.~\eqref{foa} to be a valid approximation of \eqref{lrr}, i.e., to guarantee that \eqref{foa} satisfies to $0\leq R_l(\tau)\leq1,\ l=1,2$, it is necessary that 
\begin{equation}
      \frac{1}{\eta}\leq\frac{1}{\alpha\Delta\max_\ell r_\ell^0}
      \label{dom}
\end{equation}
where $\Delta:=\max_{\tau\in\mathbb{R}_{>0}^2}|\tau_1-\tau_2|$. 

Since \eqref{foa} is a monotone routing ratio, Theorem~\ref{ch:dynselfish:t2} holds and \eqref{ch:dynselfish:os}-\eqref{dem} and \eqref{ch:dynselfish:rr} admits a globally asymptotically stable equilibrium, which can now also be calculated explicitly. So, we can proceed to analyse demand transfer and network performance.


\begin{prop}[Partial demand transfer for low compliance]\label{ch:dynselfish:p4}
    Let us assume that Assumptions~\ref{ass1}, \ref{ass6}, \ref{ass5} and condition \eqref{dom} hold. 
    If $\Phi>\overline{\Phi}$ and
    \begin{equation}
    \alpha>\Tilde{\alpha}^U:=\frac{\eta}{r_1^0(c_2\Phi+b_2-b_1-F_1(c_1+c_2))}\alpha^\mathrm{U},
    \label{lcalpha}
    \end{equation}
    then the equilibrium $x^*$ of \eqref{ch:dynselfish:os}-\eqref{dem} and \eqref{ch:dynselfish:rr} equipped with routing ratios as in \eqref{foa} presents is affected by partial demand transfer (Route~1 is saturated) and $x_A(t)\rightarrow+\infty,\ t\rightarrow+\infty$.
\end{prop}
\begin{proof}
    In this case, 
    \begin{equation*}
        x^*_l=\frac{\eta\Phi r_l^0+\alpha\Phi r_l^0r_k^0(c_k\Phi+b_k-b_l)}{v_l(\eta+\alpha\Phi r_l^0r_k^0(c_l+c_k))},\quad l=1,2.
    \end{equation*}
    Unsatisfied demand emerges on Route $1$ when $v_1x^*_1>F_1$. By plugging into the latter condition the above expression of $x^*_1$, after rearranging some terms one finds the following equivalent condition:
    \begin{equation*}
        \alpha\Phi r_1^0r_2^0(c_2\Phi+b_2-b_1-F_1(c_1+c_2))>\eta(F_1-\Phi r_1^0).
    \end{equation*}
    This condition is met if and only if $\Phi>\overline{\Phi}$ and \eqref{lcalpha} holds.
\end{proof}
Proposition \ref{ch:dynselfish:p4}, akin to Lemma \ref{lem2} and Corollary \ref{ch:dynselfish:cor1}, demonstrates that a higher traffic demand increases the system's sensitivity to the penetration rate. In fact, $\Tilde{\alpha}^\mathrm{U}$ is decreasing in $\Phi$.

\begin{prop}[Price of Anarchy for low compliance]\label{ch:dynselfish:p5}If Assumptions~\ref{ass1}, \ref{ass6}, \ref{ass5}, \ref{me1} and condition~\eqref{dom} hold, then the average travel time of equilibrium $x^*$ under routing ratios \eqref{foa}, $\poa(\alpha)$, is convex in $\alpha$. Moreover, let \begin{equation}
        \Tilde{\alpha}^\text{opt}:=\frac{2\eta}{r^0_1(b_2-b_1)}\alpha^\text{opt}.
        \label{lopt}
    \end{equation}
The following three cases are possible:    
    \begin{itemize}
    \item if $\Tilde{\alpha}^{\text{opt}}<0$, then $\poa(\alpha)$ is increasing in $\alpha$;
    \item if $\Tilde{\alpha}^{\text{opt}}\in[0,1]$, then $\poa(\alpha)$ attains its minimum at $\Tilde{\alpha}^{\text{opt}}$;
    \item if $\Tilde{\alpha}^{\text{opt}}>1$, then $\poa(\alpha)$ is decreasing in $[0,1]$.
\end{itemize}
\end{prop}
We omit its proof, since it is analogous to that of Proposition \ref{p3}. 
One can notice the similarity in the conditions provided in Propositions~\ref{p3} and~\ref{ch:dynselfish:p5}. Another interesting aspect is that $\Tilde{\alpha}^\text{opt}$ is proportional to $\eta$, which means that for lower levels of compliance, it takes a higher penetration rate to attain the optimum.

\section{Numerical simulations}\label{sec5}
This section presents numerical simulations corroborating the theoretical findings presented in Section \ref{sec4}. In Section \ref{macrosim}, we introduce macroscopic simulations, while Section \ref{microsim} focuses on microscopic simulations conducted using the Aimsun software. 

\subsection{Macroscopic simulations}\label{macrosim}

\begin{figure*}
    \centering
    \includegraphics[width=.32\textwidth]{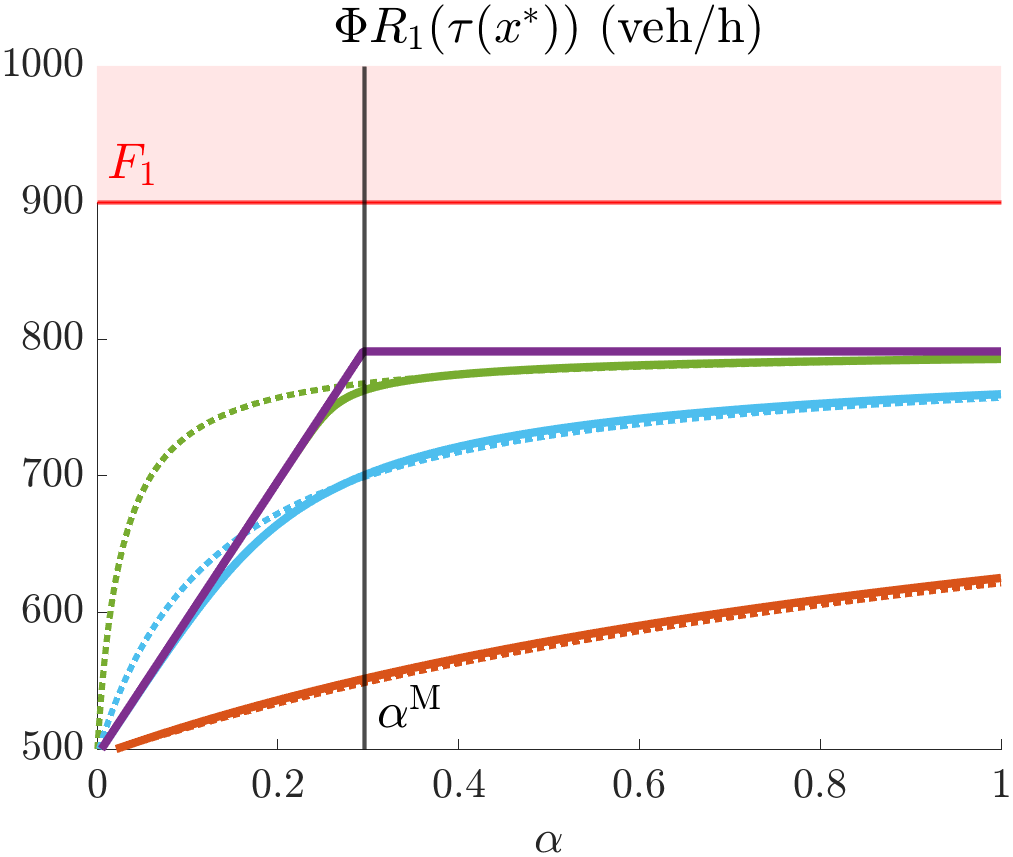}\,
    \includegraphics[width=.32\textwidth]{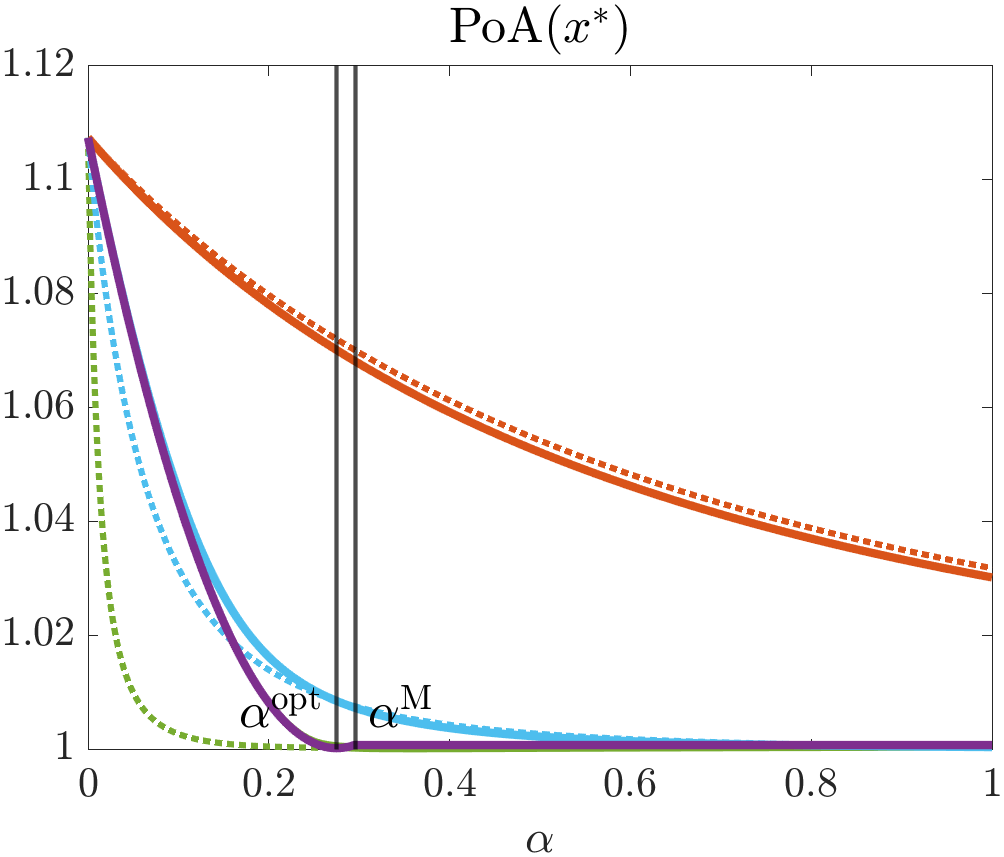}\, \includegraphics[width=.32\textwidth]{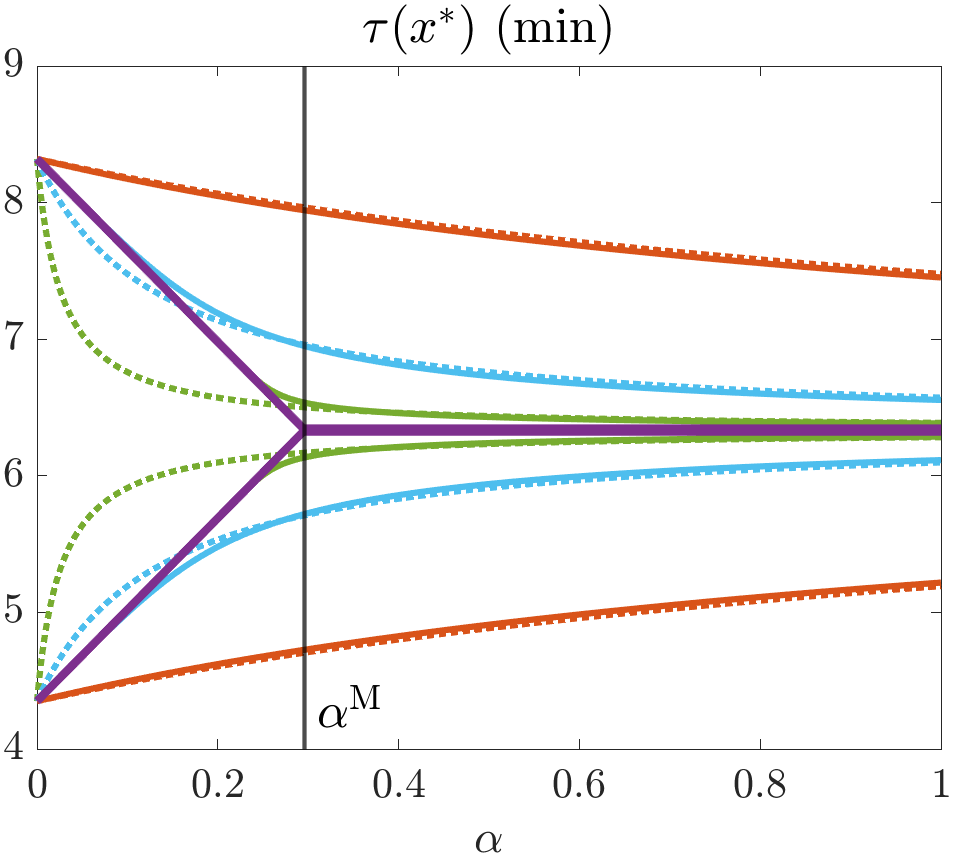} 
    \caption{Simulations for $\Phi=1500$. The three plots respectively show the demand directed toward Route $1$ (left), average travel time $T$ (middle) and route travel times (right) at equilibrium, as functions of penetration rate $\alpha$. The diamond-marked lines are the logit routing ratios, while the dashed lines correspond to the linearized model. 
    We draw the curves for $1/\eta=10$ in \textcolor{BurntOrange}{orange}, for $1/\eta=100$ in \textcolor{ProcessBlue}{light-blue},  and for $1/\eta=500$ in \textcolor{LimeGreen}{green}. The limit Wardrop equilibrium $\mathrm{WE}(\alpha)$ is drawn as \textcolor{Plum}{solid violet} lines. In the left-most plot, the area highlighted in \textcolor{red}{red} identifies the cases in which the demand toward Route $1$ is unsatisfied. In the right-most plot, the increasing travel time refers to Route $1$, the decreasing one to Route $2$.}
    \label{phi3}
\end{figure*}
\begin{figure*}
    \centering
    \includegraphics[width=.32\textwidth]{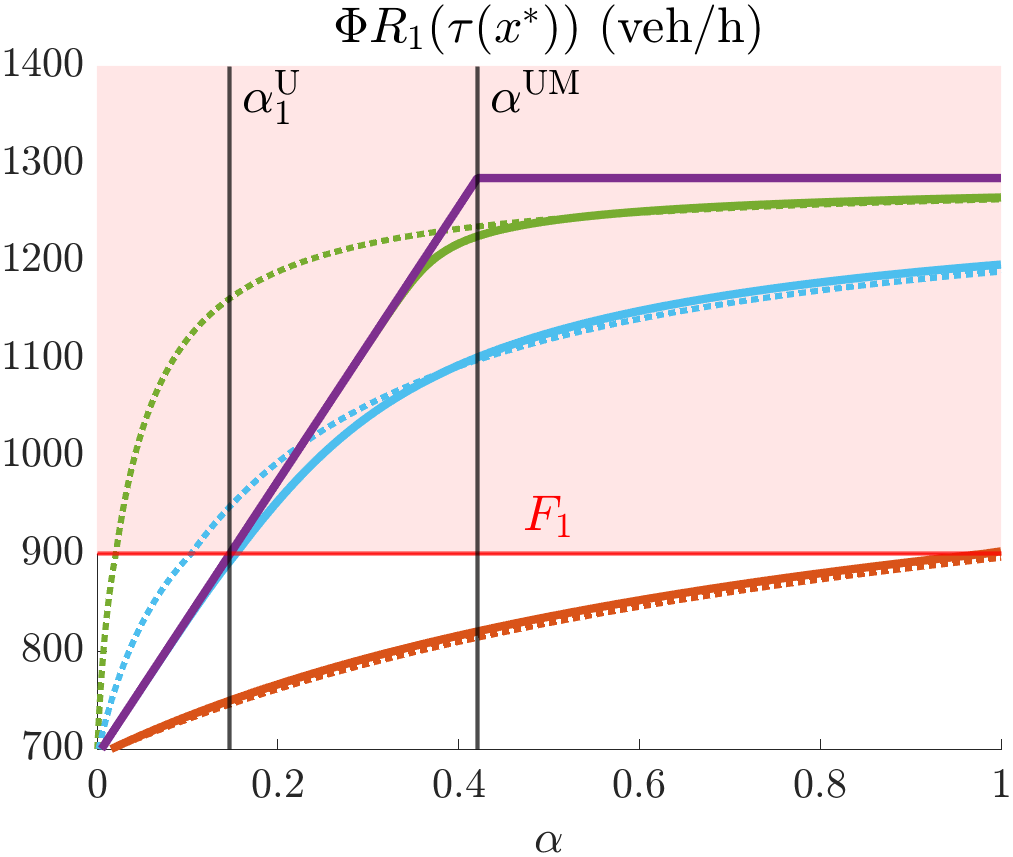}\ \includegraphics[width=.32\textwidth]{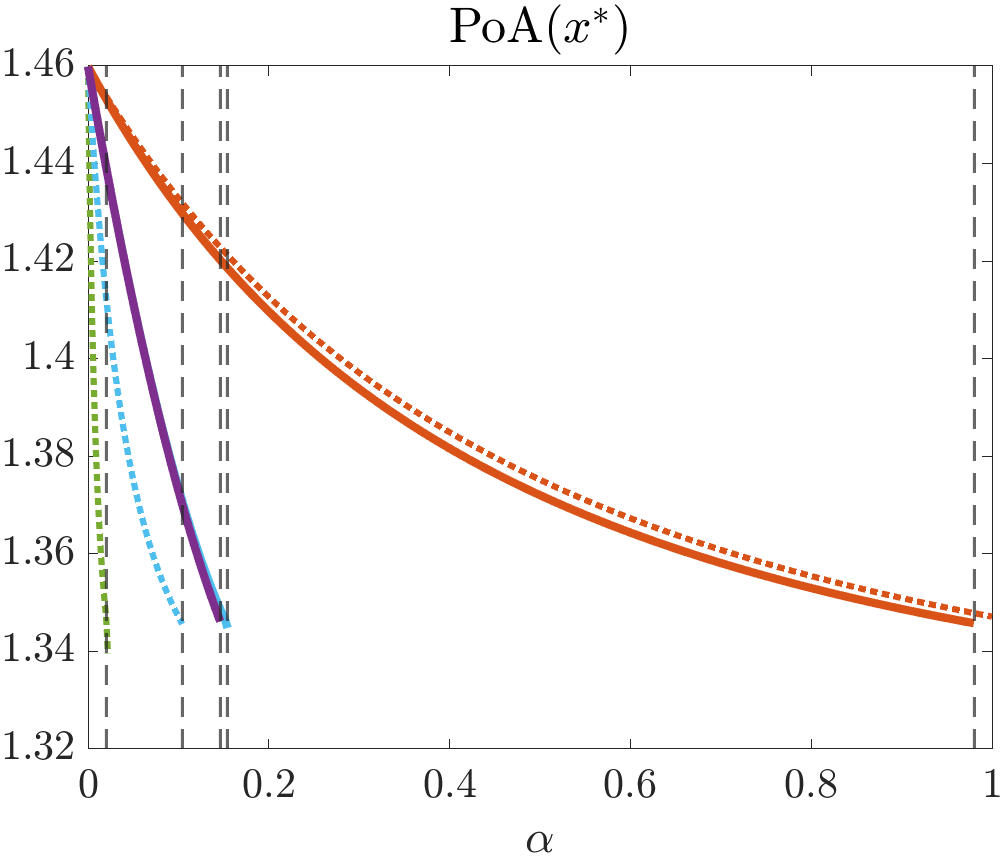}\ \includegraphics[width=.32\textwidth]{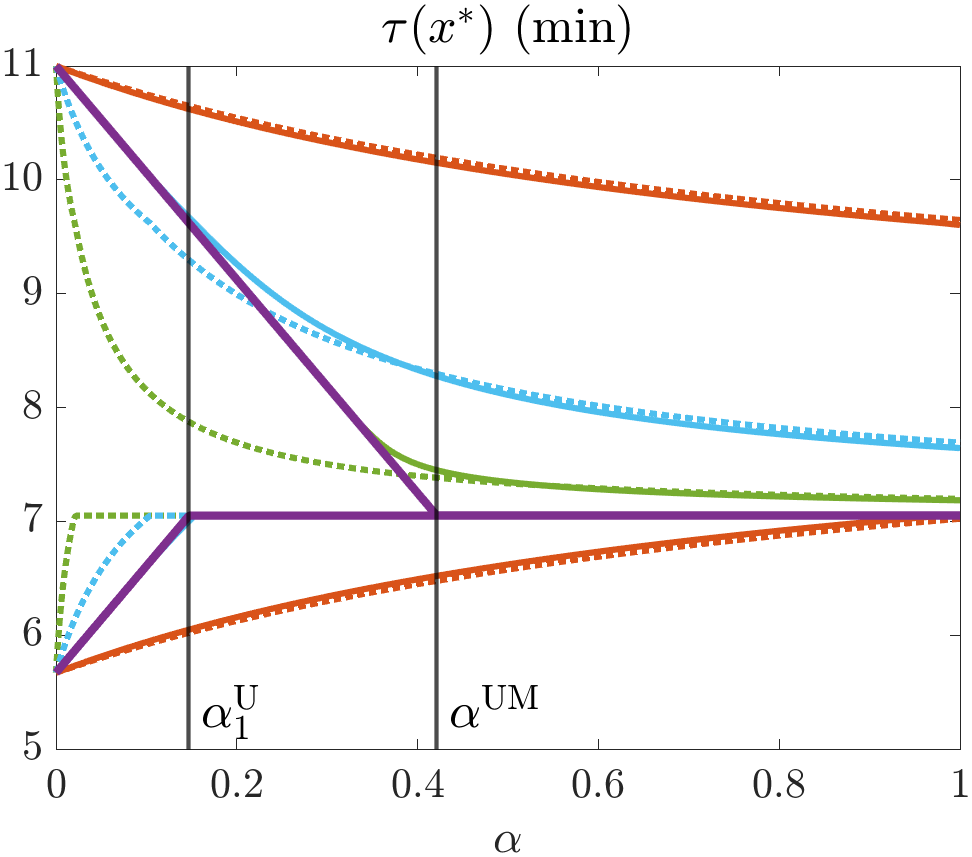}
    \caption{Simulations for $\Phi=2100$. The three plots respectively show the demand directed toward Route $1$ (left), average travel time $T$ (middle) and route travel times (right) at equilibrium, as functions of the penetration rate $\alpha$. The diamond-marked lines are the logit routing ratios, while the dashed lines correspond to the linearized model. 
    We draw the curves for $1/\eta=10$ in \textcolor{BurntOrange}{orange}, for $1/\eta=100$ in \textcolor{ProcessBlue}{light-blue}, and for $1/\eta=500$ in \textcolor{LimeGreen}{green}. The limit Wardrop equilibrium $\mathrm{WE}(\alpha)$ is drawn as \textcolor{Plum}{solid violet} lines. In the left-most plot, the area highlighted in \textcolor{red}{red} identifies the cases in which the demand toward Route $1$ is unsatisfied. In the middle plot, the lines are truncated at the value of $\alpha$ at which unsatisfied demand emerges. In the right-most plot, the increasing travel time refers to Route $1$, the decreasing one to Route $2$.} 
    \label{phi4}
\end{figure*}

In this section, we propose numerical simulations showcasing the qualitative behavior of system \eqref{ch:dynselfish:os}-\eqref{ch:dynselfish:rr}. 
We consider the following set of parameters:
\begin{equation*}
    \begin{aligned}
    &F_1=900 \text{ veh/h},\ F_2=1800 \text{ veh/h}\\
    &v_1=50 \text{ km/h},\ v_2=50 \text{ km/h}\\
    &B_1=90 \text{ veh/km},\ B_2=180 \text{ veh/km},\\
    &L_1=0.875\text{ km},\ L_2=1.35\text{ km},\\
    &a_1=0.5 \text{ h},\ a_2=1 \text{ h},\  r^0=(0.33,0.67). 
    \end{aligned}
\end{equation*}
 These parameters have been chosen to represent realistic travel conditions within an urban network, where Route~$1$ corresponds to an itinerary on a one-lane road, whereas Route~$2$ emulates an itinerary on a two-lane road. The choice of $r^0$ is motivated by the observation that drivers prefer major roads with higher capacity when no information about traffic state is available. 

In Figures~\ref{phi3} and \ref{phi4}, we provide simulations for two realistic values of demand ($\Phi=1500$ and $\Phi=2100$) and for three values of compliance ($1/\eta=10$, $1/\eta=100$ and $1/\eta=500$), which cover both high compliance and low compliance cases. First of all, we notice that, as expected, the approximation provided by $\mathrm{WE}(\alpha)$ better suits the logit model for high values of compliance, whereas the linearized model better captures low compliance scenarios. Concerning the qualitative behavior of the equilibrium point $x^*$ as a function of $\alpha$, the experiments reflect the theoretical results provided in Section~\ref{sec4}. To see this, let us comment upon Figures~\ref{phi3} and~\ref{phi4} more in detail. Notice that $\overline{\Phi}_1\approx1715\text{ veh/h},\ \overline{\Phi}_2=3685\text{ veh/h}$ and that Assumption \ref{me1} is satisfied for both values of demand.
%
%
The case of $\Phi=1500$ is illustrated in Figure \ref{phi3}: 
in this case, $\alpha^\mathrm{opt}=0.275,\ \alpha_1^\mathrm{M}\approx0.296$. Consistently with the theoretical results, unsatisfied demand does not arise on any route, for any of the parameter configurations.  For high compliance, the PoA reaches its minimum for low values of penetration rates. For low compliance, instead, the PoA is lsowly decreasing in $\alpha$. 
%
%
The case of $\Phi=2100$ is shown in Figure \ref{phi4}: in this case, $\Phi>\overline{\Phi}_1$ and partial demand transfer emerges on Route~1 for $1/\eta=100$ and $1/\eta=500$ for low values of $\alpha$ (around 0.15), and for $1/\eta=10$ for values of $\alpha$ close to 1. In both cases, the emergence of partial demand transfer translates into the density of the buffer growing unbounded over time (not shown in the plots). 

A relevant fact emerging from these simulations is that greater drivers' compliance can result in a decrease in the efficiency of the equilibrium, especially for high penetration rates. 
Since both high penetration rate and high compliance imply more informed users, these results align with the evidence in the literature about the information paradoxes in traffic networks \cite{ibp,bryce2024,wu}.

\subsection{Microscopic simulations}\label{microsim}
In this section, we propose agent-based simulations performed with the Aimsun micro-simulator \cite{aimsun}. 
\begin{figure*}[!t]
    \centering
    \includegraphics[width=.4\textwidth]{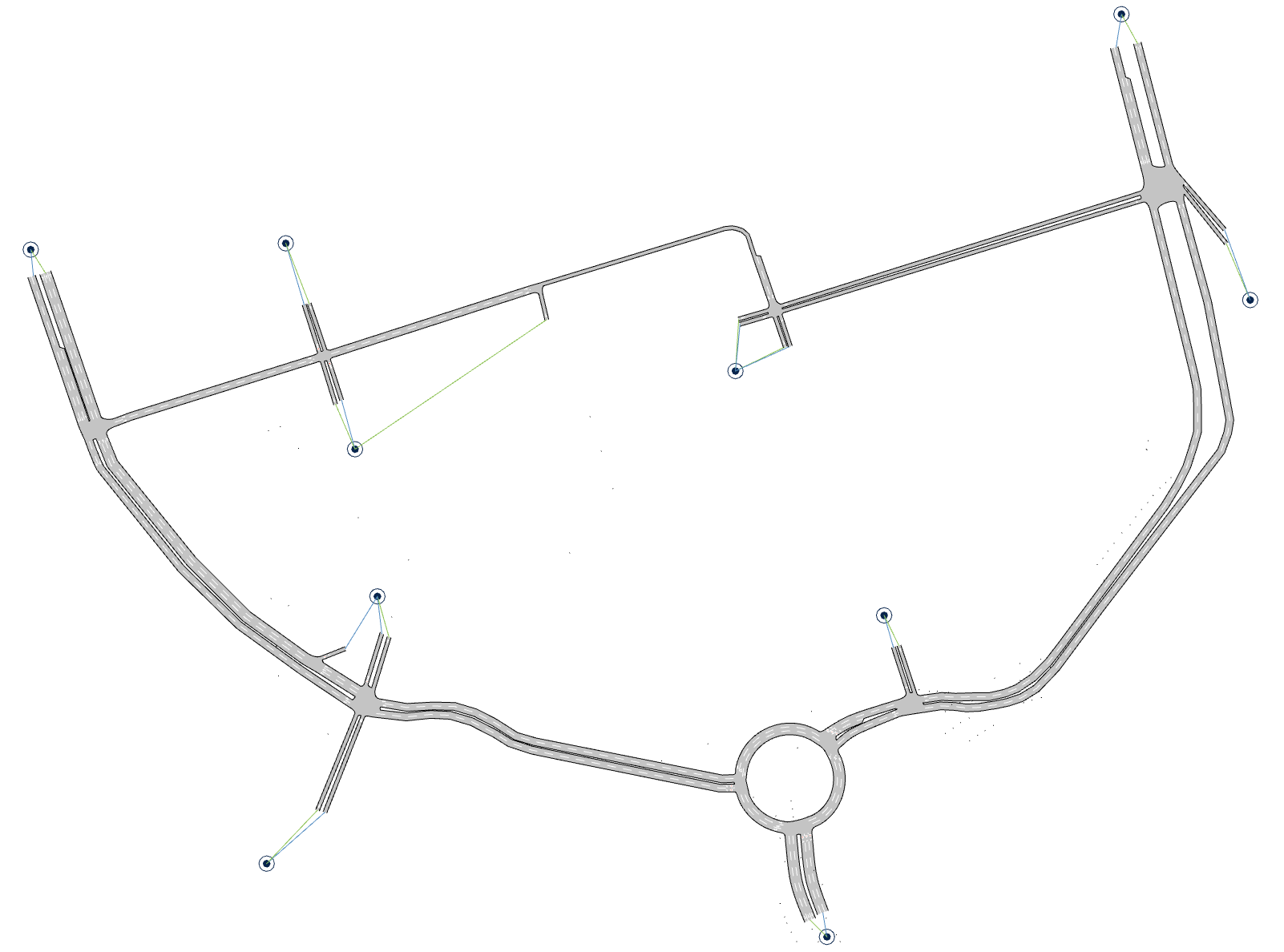}\qquad \quad
    \includegraphics[width=.45\textwidth]{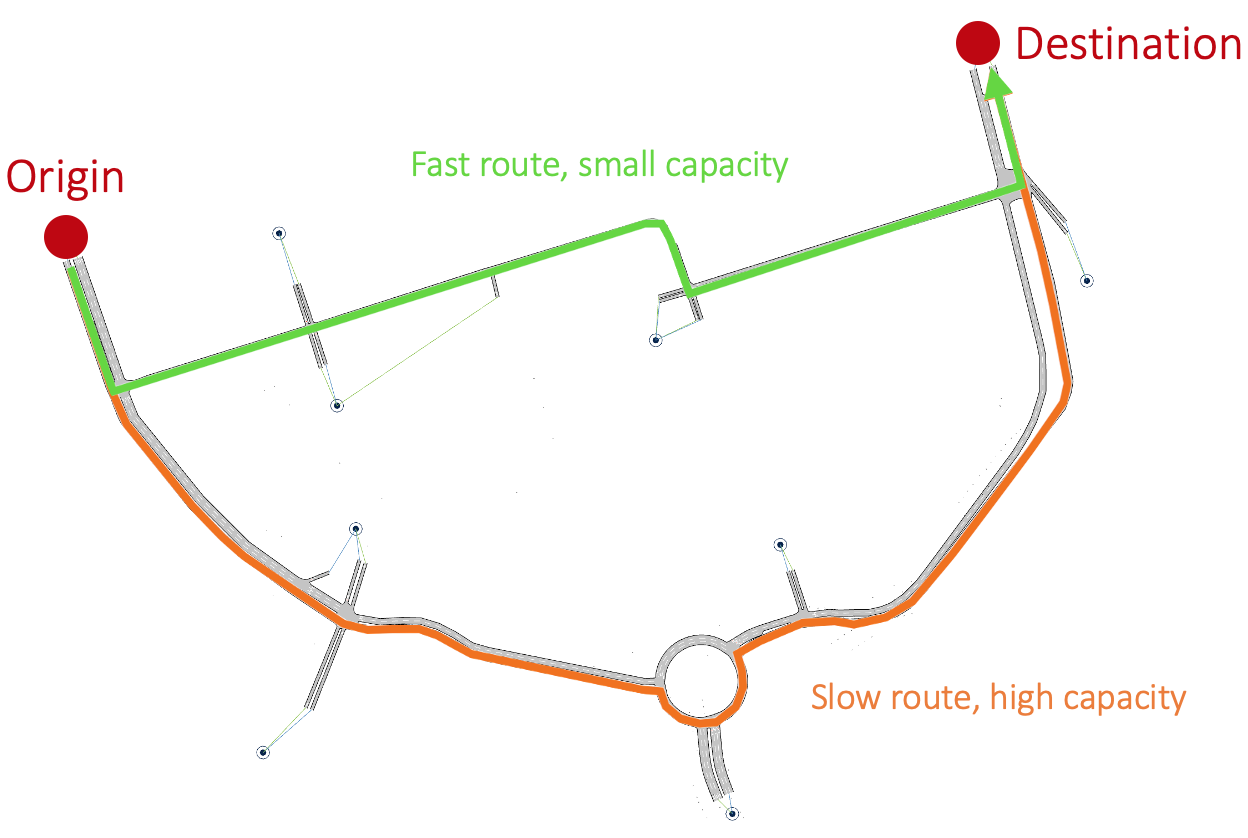}
    \caption{Network used for the Aimsun simulation.}
    \label{aimgraph}
\end{figure*}
Aimsun simulates the movement of individual vehicles through a network, based on a car-following model. Just like in our theoretical study, we focus on how access to real-time information influences routing decisions by distinguishing between app users and non-app users. App users are modeled to follow the app's recommendations, favoring routes with the shortest travel times. Conversely, non-app users are assumed to follow predetermined routes. Therefore, informed users adapt their choice to real-time traffic conditions, while non-app users do not.

\begin{figure*}[!t]
    \centering
    \includegraphics[width=.45\textwidth]{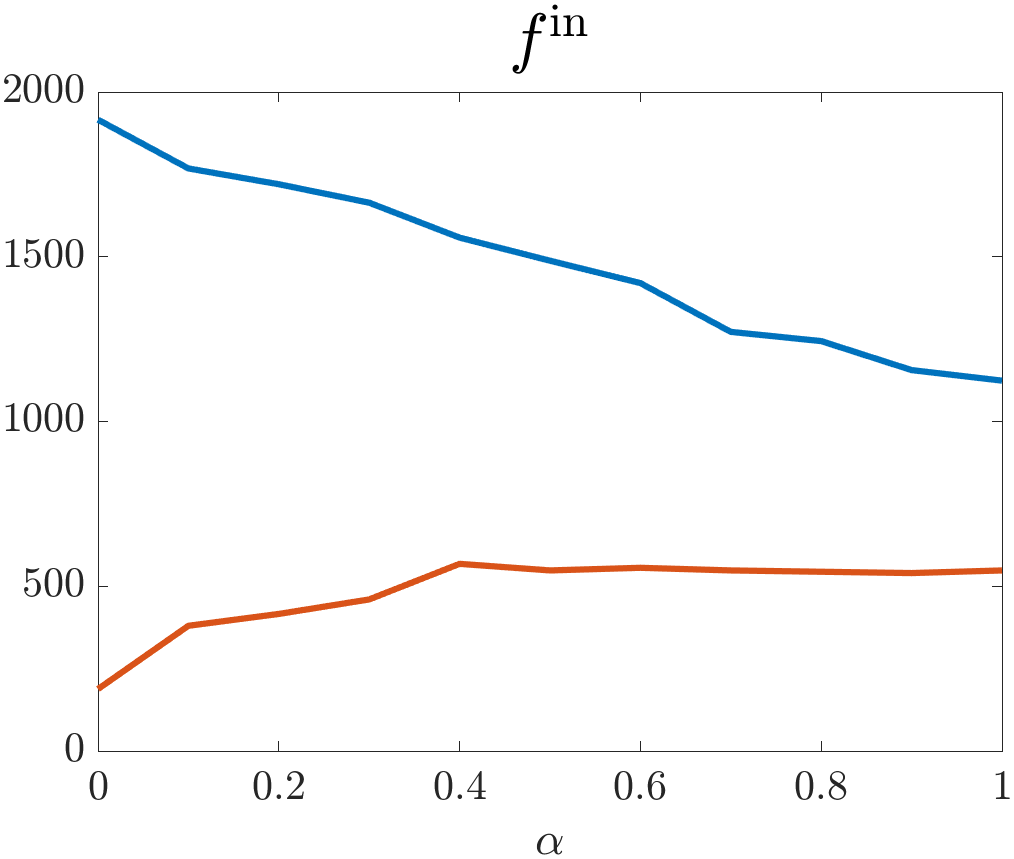}\qquad\includegraphics[width=.45\textwidth]{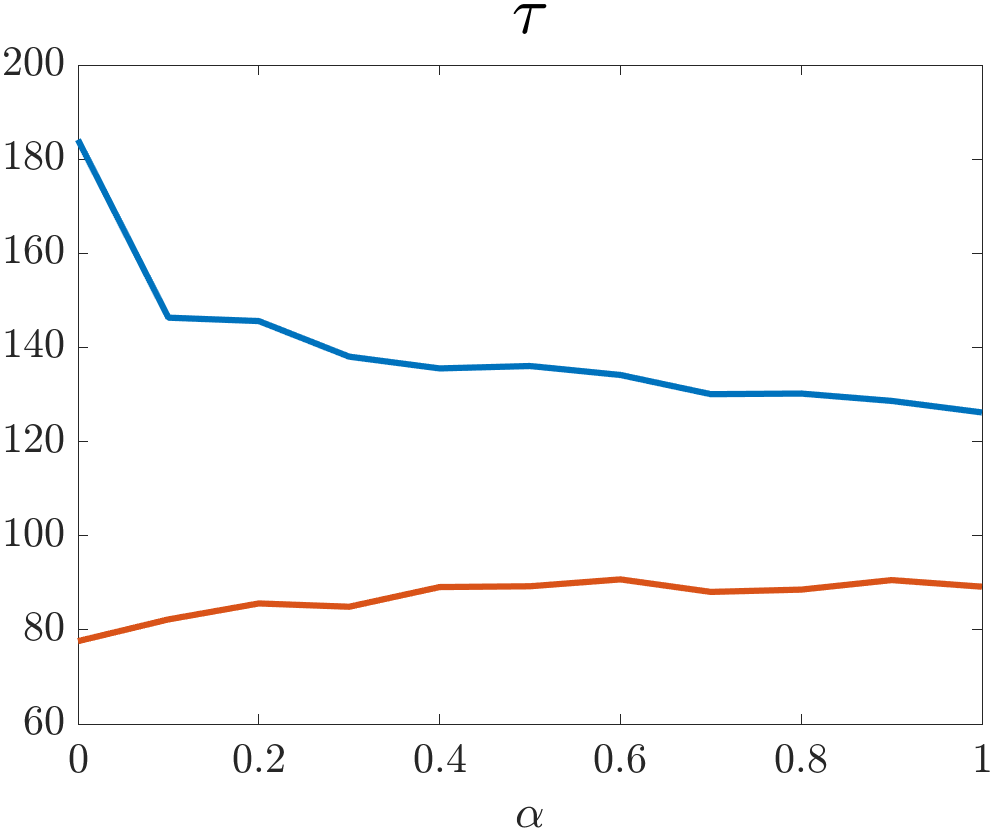}\\ \vspace{.5cm}
  \includegraphics[width=.32\textwidth]{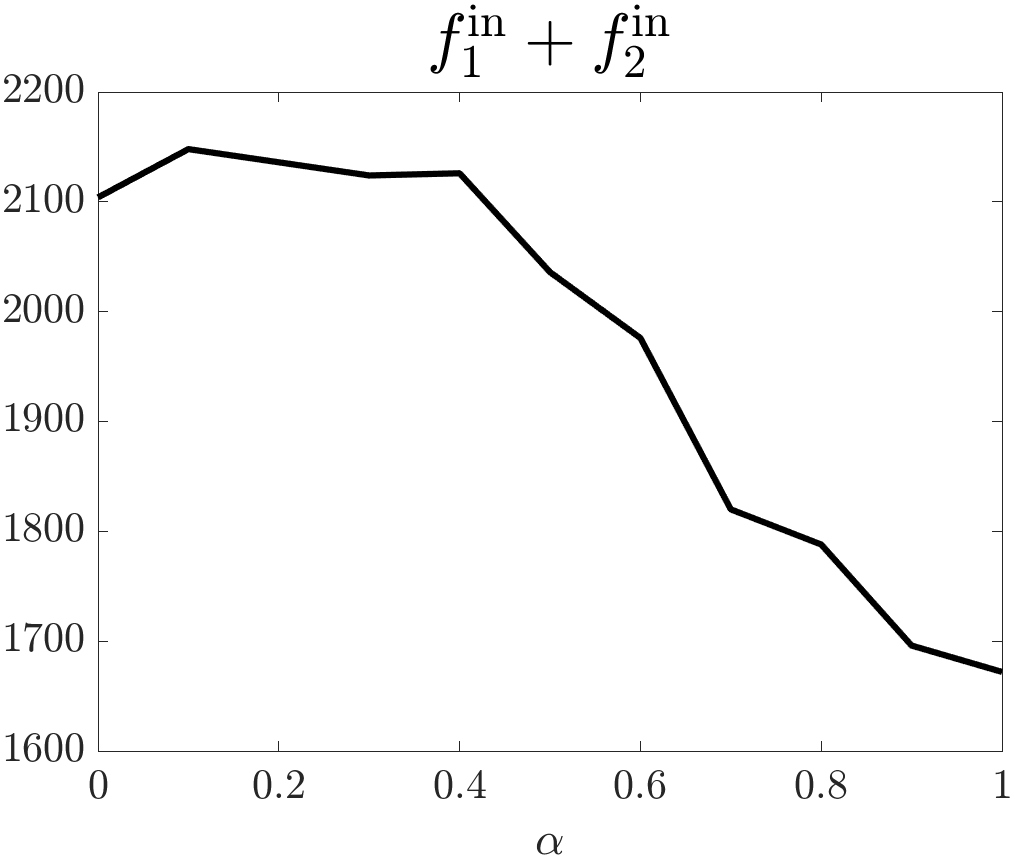}\ \ \includegraphics[width=.32\textwidth]{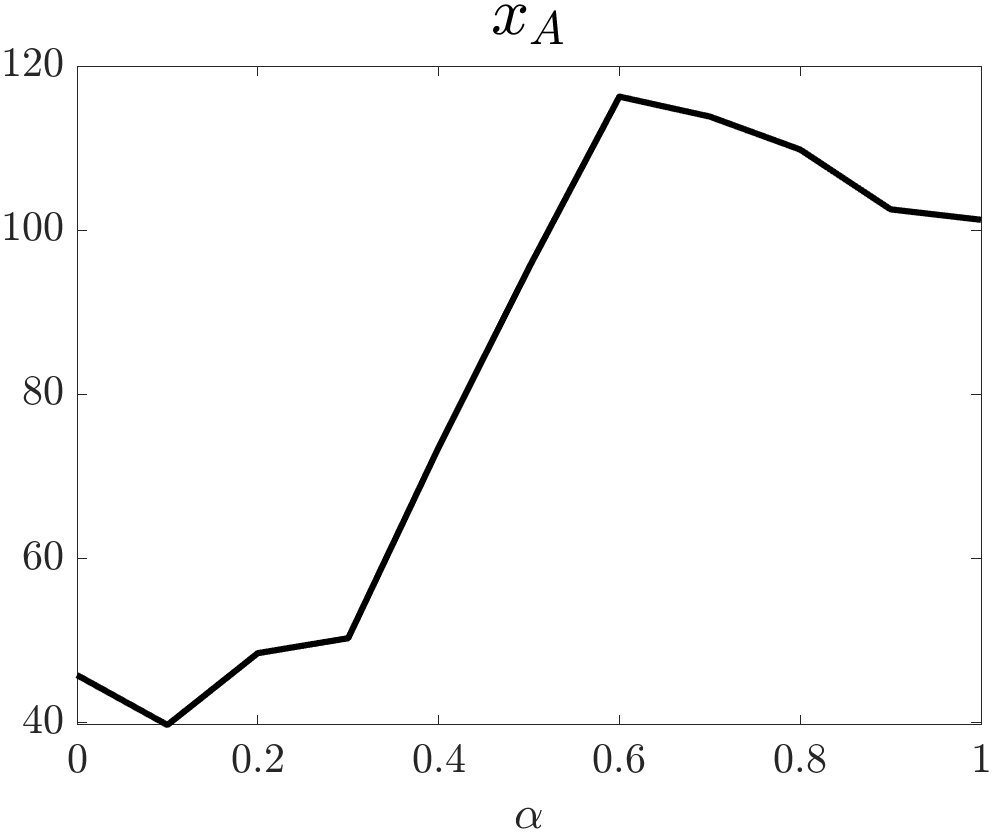}\ \ \includegraphics[width=.32\textwidth]{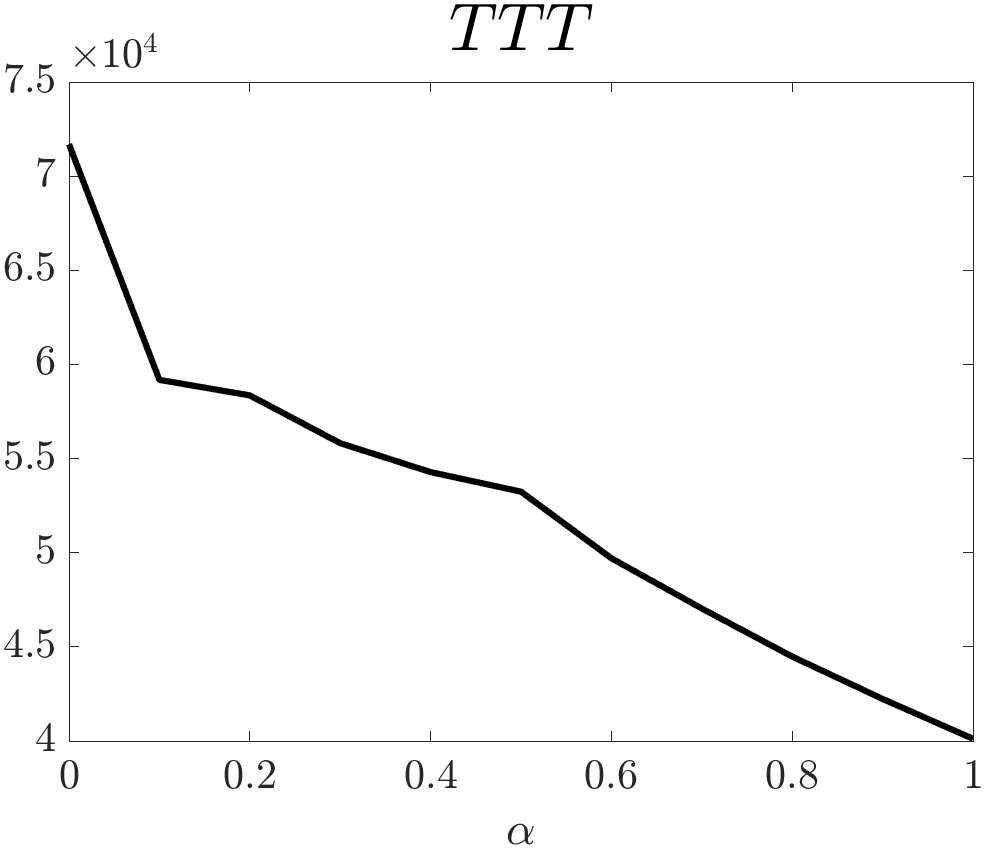}
    \caption{First line: time-averaged inflow to the two routes and time-averaged travel times on each route, each expressed as a function of the penetration rate, $\alpha$. Blue lines indicate the slow, high-capacity route, while orange lines represent the fast, small-capacity route. Data is derived from micro-simulations. Second line: time-averaged flow supplied to the two-route subsystem, time-averaged density on the access road/buffer, and total travel time across the two-route subsystem, all shown as functions of the penetration rate, $\alpha.$}
   \label{plot_aimsun}
\end{figure*}

The simulations were performed on the network shown in Figure~\ref{aimgraph}, featuring an origin-destination pair with two commuting options: a fast but low-capacity route (length: 0.875 km, maximum speed: 50 km/h, capacity of 900 veh/h) and a slower, high-capacity route (length: 1.35 km, maximum speed: 50 km/h, capacity of 1800 veh/h). The origin-destination pair is supplied with an exogenous flow $\Phi=2100$ veh/h through a short access road with a capacity of 2700 veh/h. Aimsun provides the possibility to directly model informed users using the logit choice model: in our simulations, informed users were modeled according to the logit choice model \eqref{lrr}, with a parameter of $\eta = 100$. This model was used to capture the decision-making process of app users.

Figure~\ref{plot_aimsun} shows the \emph{time-averaged} values of key quantities from the micro-simulations, i.e., their average value calculated over the entire simulation time window. In the first row, from left to right, we see the average inflow to the two routes and the average travel times on each route. The second row, from left to right, displays the time-averaged flow supplied to the two-route subsystem, the average density on the access road/buffer, and the total travel time across the two-route subsystem. We use the total travel time as a proxy for the Price of Anarchy (PoA) in this analysis, as the optimal value of the total travel time is not available from Aimsun simulations: this approximation provides a reasonable measure for comparing system efficiency under different routing scenarios, as the total travel time is the numerator of the PoA. These simulations were conducted for values of $\alpha$ ranging from 0 to 1 in increments of 0.1. For each simulation, the origin-destination pair was subjected to a traffic demand of $\Phi=2100$ vehicles, and the duration of each simulation was set to 15 minutes.

The plots confirm the theoretical findings and numerical experiments discussed in Section~\ref{sec5}. As the penetration rate of informed users increases, we observe a higher fraction of the exogenous flow being directed towards the fast, low-capacity route. However, when the penetration rate becomes excessively high, this route becomes saturated, leading to a reduction in the overall exogenous flow reaching the two-route sub-system. This saturation effect causes a corresponding rise in the average density on the access road. These results validate the idea that fastest-route recommendations can lead to partial transfer of demand, ultimately leading to congestion at the origin as reflected in the increased density. It should be noted that although these results align with the theoretical model in terms of the flow supplied to the two-road subsystem, the behavior of traffic density on the access road differs. Unlike our model, where density diverges to infinity when the system converges to a partially transferring equilibrium, the microscopic simulations show that density instead stabilizes at a finite value. This discrepancy is almost certainly due to the fact that the dynamics on the access road are not governed by a supply-and-demand mechanism. Finally, the plot of the total travel time is consistent with Figure~\ref{phi4} of the PoA before the inflow to the fast route saturates. It also shows that, for larger values of $\alpha$, the total travel time decreases even further under the partial demand transfer regime. This decrease is due to the reduced flow supplied to the two-route subsystem, which lessens congestion on the routes,  ultimately lowering the total travel time.

\section{Conclusion}\label{sec6}
In this work, we analyzed the impact of drivers' use of navigation apps on traffic through a dynamical network flow model, capturing the reactions of app-informed drivers to variations in travel times. The analysis focused on a network with two parallel links. We proved that the resulting dynamical system is globally asymptotically stable, allowing us to examine its equilibrium and gain insights into the effects of navigation apps on traffic efficiency. To achieve this, we modeled the behavior of app-informed drivers using the logit choice model, which enables us to analyze both high and low compliance with app recommendations. The results consistently highlighted that route recommendations can negatively impact traffic efficiency, going beyond the well-known issue of an increased Price of Anarchy. By incorporating a supply and demand mechanism on the network links, we were able to uncover a previously overlooked phenomenon: partial demand transfer.

This model can serve as a starting point for several meaningful extensions. For instance, the current model does not account for internal congestion. This limitation could be addressed by either incorporating routes composed of multiple Cell Transmission Model CTM-type links or by imposing a capacity limit at the destination node, which might lead to spillback on the two routes. Another important aspect would be to make the dynamics on the access roads more realistic, based on supply and demand mechanisms, to provide a more complete description of congestion phenomena at the network origin. 

More generally, extending the model to encompass more complex topologies with internal nodes and non-independent routes would enhance our understanding of the role of recommendations in traffic networks and potentially lead to a more systematic characterization of partial demand transfer. However, beyond the case studied in this paper, the properties of equilibrium uniqueness and monotonicity are not satisfied. Consequently, analyzing stability for these more complex networks poses challenges. Furthermore, as noted at the beginning of the chapter, the supply and demand constraints mean that many techniques used in previous works, which are based on the contractivity property \cite{como2013a, como2015, lovisari2014}, are not applicable in our case. Therefore, efforts will be directed toward identifying new techniques for stability analysis.


\section*{Acknowledgement}
This work has been supported in part by the French National Research Agency in the framework of the \emph{Investissements d’avenir} program ANR-15-IDEX-02 and of LabEx PERSYVAL ANR-11-LABX-0025-01.

\appendix

\section{Proof of Theorem~\ref{ch:dynselfish:t2}}\label{app1}
Theorem~\ref{ch:dynselfish:t2} is proved by exploiting the following result, based on monotonicity. While it was originally stated in \cite[Lemma 3]{mon} in terms of the \textit{semiflow} notion, we are going to state it for systems of differential equations.

\begin{defi}[Order and order convex intervals]
    Consider the partial order $\leq$ on $\mathcal{D}$: $x,y\in\mathcal{D}$ are ordered, i.e., $x\leq y$, if and only if $x_l\leq y_l,\ l=1,\dots,d$. Given two ordered points $x,y$, we define the \textit{order interval} \mbox{$[x,y]=\{p\in\mathcal{D}:x\leq p\leq y\}$}. A set $\mathcal{Y}$ is called \textit{order convex} if $[p,q]\subset\mathcal{Y}$, for every $p,q\in\mathcal{Y}$ such that $p\leq q$. Notice that order intervals are order convex sets.
\end{defi}

\begin{lem}[Monotonicity and stability]
    Consider a globally Lipschitz system $\dot{y}=h(y)$ with $h:D\rightarrow\mathbb{R}^d$ and $D\subset\mathbb{R}^d$. Suppose that:
    \begin{itemize}
        \item the system is monotone on $D$;
        \item the system admits a unique equilibrium $\overline{y}$ in $D$;
        \item every trajectory of the system has a compact closure.
        \item every neighborhood of every point $x\in D$ contains a compact and order convex neighborhood of $x$.
    \end{itemize}
    Then, the equilibrium $\overline{y}$ is globally asymptotically stable.
    \label{lem3}
\end{lem}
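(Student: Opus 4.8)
The plan is to prove the statement by the classical squeezing argument from the theory of monotone (cooperative) dynamical systems, establishing global attractivity and Lyapunov stability of $\overline{y}$ at the same time by trapping trajectories between sub- and super-solutions. Since this is essentially the ODE restatement of a known semiflow result, I would either adapt the original semiflow argument or reconstruct it along the following lines.

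First I would set up the squeeze. Fix an arbitrary $y_0\in D$. Using the fourth hypothesis (every point has compact, order-convex neighborhoods) together with the order structure, I select $a,b\in D$ with $a\leq y_0\leq b$ and $a\leq\overline{y}\leq b$, lying inside a compact order-convex set. Monotonicity then yields, for all $t\geq0$,
\[
\varphi_t(a)\leq\varphi_t(y_0)\leq\varphi_t(b),\qquad \varphi_t(a)\leq\overline{y}\leq\varphi_t(b),
\]
where the second chain uses $\varphi_t(\overline{y})=\overline{y}$. Hence it suffices to show that the two ``corner'' trajectories $\varphi_t(a)$ and $\varphi_t(b)$ both converge to $\overline{y}$: attractivity of $\overline{y}$ at $y_0$ then follows immediately by squeezing.

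The core step is the convergence of the corners, and this is where I expect the main obstacle. The cleanest route is the monotone convergence criterion: if $a$ is chosen as a sub-solution, i.e.\ $a\leq\varphi_s(a)$ for some $s>0$, then $t\mapsto\varphi_t(a)$ is non-decreasing along the times $ns$ and, being precompact (third hypothesis) and bounded above by $\overline{y}$, it converges to a fixed point of the flow, which is an equilibrium; by uniqueness this limit is $\overline{y}$, and the symmetric argument handles the super-solution $b$. Exhibiting such sub- and super-solutions arbitrarily close to $\overline{y}$ — equivalently, points $a\leq\overline{y}$ with an upward-pointing field and $b\geq\overline{y}$ with a downward-pointing field — is the delicate part, and it is precisely here that the order-convex compact neighborhoods and the absence of any equilibrium other than $\overline{y}$ are really used. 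An alternative that bypasses the explicit construction is to argue on the omega-limit set $\omega(a)$ directly: it is nonempty, compact, connected, invariant, contained in $\{z:z\leq\overline{y}\}$, and by the non-ordering property of limit sets of monotone flows together with uniqueness of the equilibrium it must collapse to $\{\overline{y}\}$.

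Finally I would extract Lyapunov stability from the same construction. Choosing the sub-solution $a$ and super-solution $b$ within an $\varepsilon$-ball of $\overline{y}$ makes the order interval $[a,b]$ positively invariant — trajectories from $a$ increase to $\overline{y}$, those from $b$ decrease to $\overline{y}$, and everything between them is trapped by monotonicity — so any $y_0\in[a,b]$ satisfies $\varphi_t(y_0)\in[a,b]\subseteq B_\varepsilon(\overline{y})$ for all $t\geq0$. Global attractivity together with Lyapunov stability then gives global asymptotic stability of $\overline{y}$, completing the proof.
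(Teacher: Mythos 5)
The paper does not actually prove this statement: it is imported (modulo restating the semiflow result for ODEs) from the cited reference, so there is no in-paper proof to compare yours against. Judged on its own terms, your sketch follows the standard squeezing strategy, but both load-bearing steps are left open. The first gap is the bracketing itself: for a general domain $D\subset\mathbb{R}^d$ there is no reason why points $a,b\in D$ with $a\leq y_0$, $a\leq\overline{y}$, $y_0\leq b$ and $\overline{y}\leq b$ should exist, since $D$ need not contain componentwise infima and suprema of pairs of its points, and the hypothesis on order-convex neighborhoods is purely local and does not supply global order bounds. (In the paper's application $D=\Omega$ is an order interval, so this is harmless there, but the lemma is stated in general.)

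The second and more serious gap is the convergence of the corner trajectories, which is essentially the entire content of the lemma, and neither of your two proposed routes closes it. The monotone convergence criterion requires $a\leq\varphi_s(a)$ for some $s>0$, i.e., that $a$ be a sub-solution; a point satisfying merely $a\leq\overline{y}$ has no reason to have this property, and you explicitly flag the construction of such points as ``the delicate part'' without carrying it out. The omega-limit alternative also does not close: $\omega(a)$ is nonempty, compact, invariant, and contained in $\{z:z\leq\overline{y}\}$, and the nonordering property prevents certain ordered pairs from lying in $\omega(a)$, but an unordered compact invariant set sitting below $\overline{y}$ need not contain $\overline{y}$, nor any equilibrium at all (a priori it could be, for instance, an unordered periodic orbit); uniqueness of the equilibrium alone does not exclude this. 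Ruling out such limit sets is exactly the nontrivial work done in the cited reference. As written, your argument reduces the lemma to an equally hard unproved claim rather than proving it, and the Lyapunov-stability half inherits the same defect, since it again presupposes sub- and super-solutions arbitrarily close to $\overline{y}$.
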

In order to apply Lemma \ref{lem3} to \eqref{td}, 
we need to verify the four conditions above. 
The third point is straighforward, after observing that $\Omega$ is compact. 
The fourth point is also immediate, if we let $N=\{x\in\Omega:|x-\overline{x}|<r\}=\{x\in\mathbb{R}^2:|x-\overline{x}|<r\}\cap\Omega$, $r>0$. By taking $p,q\in N$ such that $x\in[p,q]$, it is clear that $[p,q]\subset N$.
In the rest of this section, we shall prove the two remaining points.

\subsection{Monotonicity}
 We consider the extension of system \eqref{td} to the whole positive orthant $\mathbb{R}_{\geq0}^2$ and we shall prove the stronger property that the extension of \eqref{td} is a monotone system. To this end, we first show that it satisfies the so called \textit{K-condition}. 
System \eqref{td} is said to satisfy the K-condition if, given $a,b\in \mathbb{R}_{\geq0}^2$ such that $a\leq b$, where the inequality is meant component-wise, and $a_l=b_l$, then $\Sigma_l(b)\geq \Sigma_l(a)$.  
To verify the K-condition, notice that also for the extended systems we can identify the same system modes $\Sigma^{\text{M}_1\text{-M}_2}$ according to \eqref{td}, with the only difference that the corresponding regions $\overline{\Omega}^{\text{M}_1\text{-M}_2}$ can be unbounded and their union covers $\mathbb{R}_{\geq0}^2$  (the corresponding regions for the original system being their restrictions to $\Omega$. )
In particular, Assumption \ref{ch:dynselfish:ass4} guarantees that the jacobian matrices $J^{\text{M}_1\text{-M}_2}$ are Metzler. Then, K-condition holds inside every region $\overline{\Omega}^{\text{M}_1\text{-M}_2}$. 
The K-condition holds across different regions. To see this, consider $p,q\in\mathbb{R}_{\geq0}^2$ such that $p_1=q_1,\ p_2<q_2$ and $p$ and $q$ belong to different mode regions. We need to prove that $\Sigma_1(p)\leq\Sigma_1(q)$. From Figure~\ref{plotty}, one can see that we only need to compare points belonging to regions such that $p\in\overline{\Omega}^{\text{SF-M}_2}$, $q\in\overline{\Omega}^{\text{UF-M}_2}$, or $p\in\overline{\Omega}^{\text{SC-M}_2}$, $q\in\overline{\Omega}^{\text{UC-M}_2}$. 
\begin{itemize}
    \item $p\in\overline{R}^{\text{SF-M}_2}$, $q\in\overline{\Omega}^{\text{UF-M}_2}$:
    \begin{equation*}
        \Sigma_1(p)=\frac{\Phi R_1(\tau)-v_1p_1}{L_1}\leq \frac{F_1-v_1q_1}{L_1}= \Sigma_1(q);
    \end{equation*}
    \item $p\in\overline{\Omega}^{\text{SC-M}_2}$, $q\in\overline{\Omega}^{\text{UC-M}_2}$:
    \begin{equation*}
    \begin{aligned}
            \Sigma_1(p)&=\frac{\Phi R_1(\tau)-F_1}{L_1}\leq \frac{w_1(B_1-p_1)-F_1}{L_1} =\Sigma_1(q);
    \end{aligned}
    \end{equation*}
\end{itemize}
The inequalities above follow from the route mode's definition and by symmetry they also apply to Route~$2$, i.e., to the case in which $p,q\in\mathbb{R}_{\geq0}^2$ such that $p_1<q_1,\ p_2=q_2$, then we get that \eqref{td} satisfies to the K-condition.

\begin{figure}
        \centering
        \includegraphics[width=.8\textwidth]{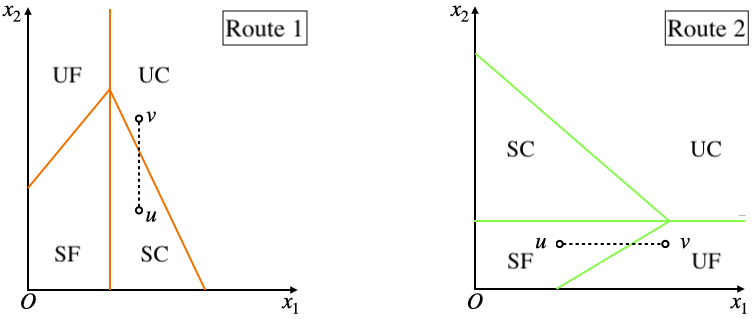}
        \caption{The K-condition can be easily verified, given a pair of points $u$ and $v$, considering only one of the two routes.}
        \label{plotty}
\end{figure}

We now prove that system \eqref{td} is monotone. 
Let us take $u,w\in\mathbb{R}_{\geq0}^2$ such that $u\leq w$. Let us take $\delta>0$ and define $\Sigma_\delta(x):=\Sigma(x)+\delta p$, where $p>\mathbf{0}$. Let $y_{\delta}(t)$ be the solution to the Cauchy problem
    \begin{equation*}
    \begin{cases}
        \dot{y}_{\delta}(t)=\Sigma_\delta(y_{\delta}(t))\\
        y_\delta(0)=w+\delta p,
    \end{cases}
    \end{equation*}
    and let $x(t,u)$ be the solution to \eqref{td} with initial condition $x(0)=u$. Notice that since $\Sigma_\delta(x)$ is Lipschitz, existence and uniqueness of $y_{\delta}(t)$ are guaranteed. Clearly, $x(0)<y_\delta(0)$. Our claim is that $x(t,u)<y_{\delta}(t),\  \forall t\geq 0$. By contradiction, suppose that there exist $\tau>0$ and $i$ such that
    \begin{equation*}
        x(\tau,u)\leq y_{\delta}(\tau),\quad x_l(\tau,u)=(y_{\delta}(\tau))_l, 
    \end{equation*}
    for some $i$, and $x(s,u)<y_{\delta}(s)$, $\forall s<\tau$. Then, it must hold that $\dot{x}_l(\tau,u)\geq(\dot{y}_\delta(\tau))_l$, which is equivalent to
    \begin{equation*}
        \Sigma_l(x(\tau,u))\geq (\Sigma_\delta)_l(y_\delta(\tau))=\Sigma_l(y_\delta(\tau))+\delta p_l.
    \end{equation*}
    This is absurd, since \eqref{td} satisfies to the K-condition, which implies that
    \begin{equation*}
        \Sigma_l(x(\tau,u))\leq (\Sigma_\delta)_l(y_\delta(\tau))=\Sigma_l(y_\delta(\tau))+\delta p_l.
    \end{equation*}
    Hence, $x(t,u)<y_{\delta}(t), \forall t\geq 0$, is proved. Finally, observe that, since $\Sigma$ and $\Sigma_\delta$ are Lipschitz continuous, by letting $\delta\rightarrow0$, the continuous dependence from initial conditions ensures that $x(t,u)\leq x(t,w),\ \forall t\geq 0.$
    We conclude by noting that the monotonicity property that we proved also holds true also for \eqref{td} restricted to $\Omega$, since $\Omega$ is an invariant set.

\subsection{Uniqueness of equilibrium}
Since $P$ is globally attractive, no equilibrium is contained in $Q$. This implies that the equilibria of \eqref{td} are contained in $P$. Hence, we can limit ourselves to only consider the system modes contained in $P$: 

        \begin{equation}
        \Sigma^{\text{SF-SF}}=\begin{cases}
            \dot{x}_1=\Phi R_1(\tau(x))-v_1x_1\\ \dot{x}_2=\Phi R_2(\tau(x))-v_2x_2
        \end{cases}
        \label{sfsf}
        \end{equation}

        \begin{equation}
          \Sigma^{\text{UF-SF}}=\begin{cases}
            \dot{x}_1=F_1-v_1x_1\\ \dot{x}_2=\Phi R_2(\tau(x))-v_2x_2
        \end{cases}
        \label{ufsf}
        \end{equation}

        \begin{equation}
          \Sigma^{\text{SF-UF}}=\begin{cases}
            \dot{x}_1=\Phi R_1(\tau(x))-v_1x_1\\ \dot{x}_2=F_2-v_2x_2
        \end{cases}
        \label{sfuf}
        \end{equation}
The equilibria of system \eqref{td} must coincide with the set of \textit{active} equilibria of the sub-systems above, 
where by active we mean that, if $x^{\Sigma^{\text{M}_1\text{-M}_2}}$ is an equilibrium of $\Sigma^{\text{M}_1\text{-M}_2}$, $x^{\Sigma^{\text{M}_1\text{-M}_2}}\in\overline{\Omega}^{{\text{M}_1\text{-M}_2}}$. We prove that the equilibrium of \eqref{td} is unique by showing that each of the above sub-systems admits a unique equilibrium and only one among them is active.

Consider the following sets:
    \begin{equation}
        \begin{aligned}
         &\gamma_l=\{x\in \Omega:\Phi R_l(\tau(x))-v_lx_l=0\},\quad l=1,2,\\
            &\epsilon_l=\{x\in \Omega:\Phi R_l(\tau(x))-F_l=0\},\quad l=1,2.
        \end{aligned}
        \label{ir}
    \end{equation}
    Sets $\gamma_1$, $\gamma_2$ are always nonempty. In fact, consider $x\in\Omega$ and notice that $x\in\gamma_l$ if and only if $x_l=\Phi R_l(\tau(x)))/v_l$. Since $R_l(\tau(x))$ is strictly decreasing in $x_l$ by Assumption~\ref{ch:dynselfish:ass4}, one can observe that for each $x_j\in[0,B_j]$, there exists a unique $x_l\in[0,B_l]$ satisfying to the above equality. The fact that $x_l\in[0,B_l]$ is guaranteed by Assumption \ref{ass6}. This proves that $\gamma_l\neq\emptyset,\ l=1,2$. 
    As for sets $\epsilon_1,\ \epsilon_2$, from \eqref{ch:dynselfish:rr} we have 
    \begin{equation*}
        r_l(x)=\frac{1}{\alpha}\left(\frac{F_l}{\Phi}-(1-\alpha)r_l^0\right),\quad l=1,2.
    \end{equation*}
    Since $0\leq r_l(x)\leq 1, \forall x\in\Omega,\ l=1,2$, in order to have $\epsilon_l\neq\emptyset$, it must be that
    \begin{equation*}
        (1-\alpha)\Phi r_l^0\leq F_l\leq (1-\alpha)\Phi r_l^0+\alpha\Phi.
    \end{equation*}
    
     Now, since routing ratios are assumed to be $C^1$ in $\Omega$ and strictly monotone by Assumption \ref{ch:dynselfish:ass4}, when the sets above are nonempty, the implicit function theorem guarantees the existence of functions between the variables $x_1$ and $x_2$ making explicit each of the implicit relations in \eqref{ir}. 
     Let the following be the functions associated to relations \eqref{ir}:
    \begin{equation}
    \begin{aligned}
        &x_2:=\Tilde{\gamma}_1(x_1),\ \ x_2:=\Tilde{\gamma}_2(x_1),\\ &x_2:=\Tilde{\epsilon}_1(x_1),\ \ x_2:=\Tilde{\epsilon}_2(x_1). 
    \end{aligned}
        \label{er}
    \end{equation}
    Then, sets in \eqref{ir} are the graphs of functions~\eqref{er}. The implicit function theorem and Assumption \ref{ch:dynselfish:ass4} also ensure that they are strictly increasing in $x_1$:
    \begin{equation*}
    \begin{aligned}
        &\Tilde{\gamma}'_1(x_1)=-\frac{\frac{\partial}{\partial x_1}(\Phi R_1(\tau(x)-v_1x_1)}{\frac{\partial}{\partial x_2}(\Phi R_1(\tau(x))-v_1x_1)}=\frac{v_1-\frac{\partial R_1(\tau(x))}{\partial x_1}}{\frac{\partial R_1(\tau(x))}{\partial x_2}}>0,
        \end{aligned}
    \end{equation*}
    \begin{equation*}
    \begin{aligned}
    &\Tilde{\gamma}'_2(x_1)=-\frac{\frac{\partial}{\partial x_1}(\Phi R_2(\tau(x)-v_2x_2)}{\frac{\partial}{\partial x_2}(\Phi R_2(\tau(x))-v_2x_2)}=\frac{\frac{\partial R_2(\tau(x))}{\partial x_1}}{v_2-\frac{\partial R_2(\tau(x))}{\partial x_2}}>0,
    \end{aligned}
    \end{equation*}
    and
    \begin{equation*}
        \Tilde{\epsilon}'_l(x_l)=-\frac{\frac{\partial}{\partial x_1}(\Phi R_l(\tau(x))-F_l)}{\frac{\partial}{\partial x_2}(\Phi R_l(\tau(x))-F_l)}=-\frac{\frac{\partial R_l(\tau(x))}{\partial x_1}}{\frac{\partial R_l(\tau(x))}{\partial x_2}}>0,
    \end{equation*}
for $l=1,2.$
    
   We now show that each sub-system in $P$ admits a unique equilibrium point. First of all, observe that the image of $\Tilde{\gamma}_1(x_1)$ is the whole $[0,B_2]$, since, for each $x_2\in[0,B_2]$, the map $x_1=\Phi R_1(\tau_1(x_1),\tau_2(x_2))/v_1$ always admits a unique fixed point in $[0,B_1]$. Analogously, $\Tilde{\gamma}_2(x_1)$ is defined over the whole interval $[0,B_1]$, since, for each $x_1\in[0,B_1]$, the map $x_2=\Phi R_1(\tau_1(x_1),\tau_2(x_2))/v_2$ always admits a unique fixed point $[0,B_2]$. This, combined with the fact that both functions are strictly increasing in $x_1$, implies that $\Tilde{\gamma}_1(x_1)$ and $\Tilde{\gamma}_2(x_1)$ intersect lines $\{x_2=C_2\}$ and$\{x_1=C_1\}$ in one and only one point inside $\Omega$, respectively. These two points correspond to the unique equilibria of systems $\Sigma^{\text{SF-UF}}$, $\Sigma^{\text{UF-SF}}$, namely $x^{\Sigma^{\text{SF-UF}}}$, $x^{\Sigma^{\text{UF-SF}}}$.  

    Now, from above, it follows that the functions $\Tilde{\gamma}_1(x_1)$ and $\Tilde{\gamma}_2(x_1)$ admit two points of the form $(x_1,0)$, $(0,x_2)$ satisfying to their equations, respectively. Moreover, we have that
    \begin{equation*}
    \begin{aligned}
        \Tilde{\gamma}'_2(x_1)-\Tilde{\gamma}'_1(x_1)
        &=-\frac{v_1v_2+\sum_{i\neq j}v_l\frac{\partial R_j(x)}{\partial x_l}}{\frac{\partial R_1(\tau(x))}{\partial x_2}\left(v_2+\frac{\partial R_1(\tau(x))}{\partial x_2}\right)}\\
        &\leq-\frac{v_1v_2}{K(v_2+K)}<0.
        \end{aligned}
    \end{equation*}
    The combination of these two facts implies that
    there exist a unique point in $\Omega$ such that $\Tilde{\gamma}_1(x_1)=\Tilde{\gamma}_2(x_1)$, which corresponds to the unique equilibrium point of sub-system $\Sigma^{\text{SF-SF}}$, namely $x^{\Sigma^{\text{SF-SF}}}$.

    Finally, we prove one and only one among $x^{\Sigma^{\text{SF-SF}}}$, $x^{\Sigma^{\text{UF-SF}}}$, $x^{\Sigma^{\text{SF-UF}}}$ is active. Before proceeding, let us point out some facts. First of all, observe that when $\epsilon_1$ and $\epsilon_2$ are non empty, the graph of the their functions $\Tilde{\epsilon}_1(x_1)$, $\Tilde{\epsilon}_2(x_1)$ split the state space into two separate regions:
    \begin{equation*}
    \begin{aligned}
        &E_l^+:=\{x\in\Omega:\ \Phi R_l(\tau(x))-F_l>0\},\\ 
        &E_l^-:=\{x\in\Omega:\ \Phi R_l(\tau(x))-F_l<0\},
    \end{aligned}
    \end{equation*}
    $l=1,2$. Regions $E_1^+$ and $E_2^+$ are those where there is unsatisfied demand on route $1$ and $2$, respectively. 
    One can see that $E_1^+$ and $E_1^-$ stand above and below the graph of $\Tilde{\epsilon}_1(x_1)$, respectively, whereas $E_2^+$ and $E_2^-$ stand below and above the graph of $\Tilde{\epsilon}_2(x_2)$, respectively. Observe that the following identities hold: 
    \begin{center}
    $\Omega^{\text{SF-SF}}=E_1^-\cap E_2^-\cap P,\quad\Omega^{\text{UF-SF}}=E_1^+\cap E_2^-\cap P$,\\ $ $\\ $\Omega^{\text{SF-UF}}=E_1^-\cap E_2^+\cap P$.
    \end{center}
     Assumptions~\ref{ass1} and \ref{ch:dynselfish:ass4} ensure that $\Tilde{\epsilon}_1(x_1)>\Tilde{\epsilon}_2(x_1),\ \forall x\in\Omega$ when both functions are defined, and $\Tilde{\epsilon}_1(x_1)\geq\Tilde{\gamma}_1(x_1), \ x_1\leq C_1$, $\Tilde{\epsilon}_2(x_1)\leq\Tilde{\gamma}_2(x_1), \ x_2\leq C_2$, where the equality holds if and only if $x_1=C_1$ and $x_2=C_2$, respectively. Also, recall that $\Tilde{\gamma}_2(x_1)\geq \Tilde{\gamma}_1(x_1),\ x_1\leq \left(x^{\text{SF-SF}}\right)_1$, where the equality holds if and only if $x_1= \left(x^{\text{SF-SF}}\right)_1$, and $\Tilde{\gamma}_2(x_1)< \Tilde{\gamma}_1(x_1),\ x_1> \left(x^{\text{SF-SF}}\right)_1$.
    We are going to distinguish two different cases.

{\textbf{Case 1:} $x^{\Sigma^{\text{SF-SF}}}\in P$.}
        In order for $x^{\Sigma^{\text{SF-SF}}}$ to be active, it must hold that $x^{\Sigma^{\text{SF-SF}}}\in \Omega^{\text{SF-SF}}$. This leads to two distinct sub-cases.
        \begin{itemize}
            \item \textbf{Sub-case 1:} $\epsilon_1$, $\epsilon_2$ are both empty.\\
            From Assumption \ref{ass1}, $\epsilon_1$, $\epsilon_2$ are both empty if and only $F_l\geq\alpha\Phi+(1-\alpha)\Phi r_l^0,\ l=1,2$. Hence, unsatisfied demand cannot arise on either route inside $\Omega$. Thus, $P\equiv\overline{\Omega}^{\text{SF-SF}}$. The latter implies that neither $x^{\Sigma^{\text{UF-SF}}}$ nor $x^{\Sigma^{\text{SF-UF}}}$ can be active, hence $\overline{x}\equiv x^{\Sigma^{\text{SF-SF}}}$.
            \item \textbf{Sub-case 2:} at least one among $\epsilon_1$, $\epsilon_2$ is non empty. \\
            In this case, region $P$ undergoes a partition. Nevertheless, the fact that $\Tilde{\epsilon}_1(x_1)>\Tilde{\epsilon}_2(x_1)$, $\Tilde{\epsilon}_1(x_1)\geq\Tilde{\gamma}_1(x_1)$, $\Tilde{\epsilon}_2(x_1)\leq\Tilde{\gamma}_2(x_1)$ ensures that $\Omega^{\text{SF-SF}}$ is non-empty and that $x^{\Sigma^{\text{SF-SF}}}\in \Omega^{\text{SF-SF}}$. Thus, $\overline{x}\equiv x^{\Sigma^{\text{SF-SF}}}$. As for $x^{\Sigma^{\text{UF-SF}}}$, it cannot be active, since the intersection between the graph of $\Tilde{\gamma}_2(x_1)$ and the line $\{x_1=C_1\}$ occurs below the graph of $\Tilde{\epsilon}_1(x_1)$. The latter follows from $\Tilde{\gamma}_2(x_1)\leq \Tilde{\gamma}_1(x_1),\ x_1> \left(x^{\text{SF-SF}}\right)_1$. Analogously, $x^{\Sigma^{\text{SF-UF}}}$ cannot be active.
        \end{itemize}
        
        \textbf{Case 2:} $x^{\Sigma^{\text{SF-SF}}}\notin P$.
        First of all, notice that if $x^{\Sigma^{\text{SF-SF}}}\notin P$, then it is not active and either $x^{\Sigma^{\text{SF-SF}}}\in \{x_1>C_1,\ x_2\leq C_2\}$ or $x^{\Sigma^{\text{SF-SF}}}\in \{x_1\leq C_1,\ x_2> C_2\}$. Indeed, if $x^{\Sigma^{\text{SF-SF}}}\in \{x_1>C_1,\ x_2> C_2\}$, then $\Phi R_l(\tau(x^{\Sigma^{\text{SF-SF}}}))=v_l(x^{\Sigma^{\text{SF-SF}}})_l>F_l,\ l=1,2$,
        contradicting Assumption~\ref{ass1}. Suppose then that $x^{\Sigma^{\text{SF-SF}}}\in \{x_1>C_1,\ x_2\leq C_2\}$. Then, it must be that $\Tilde{\gamma}_2(C_1)> \Tilde{\gamma}_1(C_1)$ and $\Tilde{\gamma}_2(C_1)\leq C_2$. Moreover, $\Tilde{\epsilon}_1(C_1)= \Tilde{\gamma}_1(C_1)$. Thus, $x^{\Sigma^{\text{UF-SF}}}\in \Omega^{\text{UF-SF}}$. Finally, observe that, again, since $\Tilde{\gamma}_2(C_1)> \Tilde{\gamma}_1(C_1)$ and $\Tilde{\gamma}_2(C_1)\leq C_2$, $\Tilde{\gamma}_1(x_1)$ cannot intersect the line $\{x_2=C_2\}$ inside $P$, i.e, $x^{\Sigma^{\text{SF-UF}}}$ is not active.
        One can repeat the same process in the case in which $x^{\Sigma^{\text{SF-SF}}}\in \{x_1\leq C_1,\ x_2> C_2\}$. In this case, $x^{\Sigma^{\text{SF-UF}}}$ is the only active equilibrium point.
    
    To conclude, it might be that $x^{\Sigma^{\text{SF-SF}}}$ coincides with $x^{\Sigma^{\text{UF-SF}}}$ or $x^{\Sigma^{\text{SF-UF}}}$. One can verify that this only happens when $x^{\Sigma^{\text{SF-SF}}}\in\epsilon_1$ and $x^{\Sigma^{\text{SF-SF}}}\in\epsilon_2$, respectively. In this case, the two coinciding equilibria represent the unique equilibrium $\overline{x}$ of the system .
    


\end{document}